\documentclass[oupdraft]{bio}
\makeatletter
\def\ps@plain{\let\@oddhead\@empty\let\@evenhead\@empty\let\@oddfoot\@empty\let\@evenfoot\@empty\let\@mkboth\markboth}
\makeatother
 
\setcitestyle{round}
\usepackage[utf8]{inputenc}
\usepackage[T1]{fontenc}
\usepackage{hyperref}
\usepackage{url}
\usepackage{booktabs}
\usepackage{amsfonts}
\usepackage{nicefrac}
\usepackage{microtype}
\usepackage{amsmath}
\usepackage{amssymb}
\usepackage{amsthm}
\usepackage{graphicx}
\usepackage{bbm}
\usepackage{array}
\usepackage{soul}
\usepackage{multirow}
\usepackage{adjustbox}
\usepackage{setspace}
\usepackage{wrapfig}
\usepackage{framed}
\usepackage{tcolorbox}
\usepackage{twemojis}
\usepackage{algorithm}
\usepackage{algorithmicx}
\usepackage{algpseudocode}
\usepackage{xurl}
\usepackage{tikz}
\usetikzlibrary{arrows, shapes.arrows, shapes.geometric, shapes.swigs, shapes.multipart, decorations.pathreplacing, decorations.pathmorphing, positioning, shapes,arrows.meta, calc, shadows}
\tikzset{rv/.style={circle,inner sep=1pt,draw,font=\sffamily},
lv/.style={circle,inner sep=1pt,fill=gray!50,draw,font=\sffamily},
fv/.style={rectangle,inner sep=1.5pt,fill=gray!20,draw,font=\sffamily},
node distance=15mm, >=stealth}

\usepackage[capitalise]{cleveref}
\usepackage{nameref}

\newcounter{mylabelcounter}

\makeatletter
\newcommand{\labelText}[2]{%
#1\refstepcounter{mylabelcounter}%
\immediate\write\@auxout{%
  \string\newlabel{#2}{{1}{\thepage}{{\unexpanded{#1}}}{mylabelcounter.\number\value{mylabelcounter}}{}}%
}%
}

\newcommand{\eggboxlabel}{\phantomsection\def\@currentlabel{\protect\twemoji{egg}}\label{box:egg}}

\newtheorem{assumption}{Assumption}

\crefname{assumption}{Assumption}{Assumptions}
\Crefname{assumption}{Assumption}{Assumptions}

\definecolor{TypeINode}{HTML}{E8590C}
\definecolor{TypeIINode}{HTML}{868E96}
\definecolor{TypeIIINode}{HTML}{1971C2}

\newcommand{\E}{\mathbb{E}}

\newcommand{\Pa}[1]{\mathrm{Pa}(#1)}
\newcommand{\Ch}[1]{\mathrm{Ch}(#1)}

\newcommand{\De}[1]{\mathrm{De}(#1)}
\newcommand{\Nd}[1]{\mathrm{Nd}(#1)}

\newcommand{\indep}{\mathrel{\text{\scalebox{1.07}{$\perp\mkern-10mu\perp$}}}}
\newcommand{\sD}{\mathrm{D}}

\newcommand{\sC}{\mathrm{C}}

\begin{document}

\title{
Expert-Guided g-computation with Large Language Models for Estimating Causal Effects on Timings:\\
Applications to Hospital Quality Improvement
}

\author{Patrick Vossler$^1$, Jialin Ouyang$^1$, F.~Richard Guo$^2$, Anran Huang$^1$,\\\vspace{-0.3em}
Ali Shojaie$^3$, Lucas Zier$^1$, Fan Xia$^{1*}$, Jean Feng$^{1*\dagger}$}

\address{
$^1$University of California, San Francisco,
$^2$University of Michigan, Ann Arbor\\
$^3$University of Washington, Seattle\\
$^*$ Co-senior authors\\
$^\dagger$ Corresponding author: \texttt{jean.feng@ucsf.edu}, 550 16th Street, San Francisco, CA, 94158
}

\markboth%
{Vossler et al.}
{Expert-guided g-computation with LLMs}

\maketitle

\begin{abstract}
{
Hospital quality improvement (QI) programs are routinely faced with multiple candidate interventions to optimize hospital flow, but existing methods are limited in their ability to estimate and rank the causal effects of such interventions.
This work focuses on one of the most standard hospital metrics---the average length of stay (LOS)---and its associated causal estimand---the average time saved.
To characterize this causal effect, qualitative approaches rely on expert judgment to map complex patient trajectories, making them susceptible to cognitive biases; conversely, quantitative approaches rely on data-driven models, which fail when interventions are entirely hypothetical with no historical data and/or complex causal mechanisms that cannot be learned through data and instead require clinical reasoning.
Here we propose \textit{expert-guided g-computation} or egg-computation, which builds on the complementary strengths of qualitative and quantitative approaches and connects Gantt charts commonly used to map patient trajectories with the causal DAG literature.
We first introduce a causal model over Gantt charts and establish causal identification using a variant of g-computation that only seeks expert input for components unidentifiable from data.
To make egg-computation practical, we develop a Large Language Model (LLM)-assisted pipeline that reliably scales up expert reasoning.
Simulations show that egg-computation improves accuracy compared to conventional causal inference methods in complex settings where patients have diverse causal structures and intervention mechanisms.
In a study of eleven candidate QI interventions at an urban safety-net hospital, the LLM pipeline generated graphs and time-saving estimates that were highly concordant with those of human experts.
Beyond healthcare, egg-computation is a broadly applicable framework for estimating average time saved for candidate interventions whose causal mechanisms can be represented using Gantt charts.
}

{{Causal inference, g-computation, Large language models, Gantt charts}}
\end{abstract}

\section{Introduction}\label{sec:intro}
Hospital Quality Improvement (QI) programs seek to improve the healthcare delivery process and key hospital performance metrics like length of stay (LOS).
Numerous QI interventions have been proposed in the literature, including better triaging of imaging studies and medical procedures \citep{Xu2026-new, Hurlen2010-new}, balancing workload across a week \citep{Vidal-Carreras2022-bg}, and using Artificial Intelligence (AI)-based clinical decision support tools \citep{Levin2021-new, Wornow2023-uk}.
Recently, Large Language Models (LLMs) have even been used to generate a bank of candidate QI initiatives by identifying delays/bottlenecks most commonly mentioned in hospital records \citep{Vossler2026-ew}.
Yet QI programs can only implement a handful of interventions---often just one or two---and must therefore prioritize by expected impact, which ultimately requires answering a \textit{causal} question.
In this work, our focus is on characterizing the causal effect on LOS, i.e., \textit{average time saved}, as LOS is among the most widely-used metrics of hospital efficiency, is linked to patient outcomes, and factors into reimbursement \citep{Rojas-Garcia2018-dn, Ghosh2023-fv, Liang2026-uq, The-Press-Association2014-zs}.

To anticipate how a QI intervention would affect LOS, QI teams commonly rely on semi-structured qualitative analyses such as staff interviews, manual chart review, and Lean healthcare methods \citep{Catalyst2018-ho, D-Andreamatteo2015-zg}. 
A central tool in these qualitative analyses is the classical Gantt chart (or, equivalently, the value stream map) \citep{Rother1999-new, Vidal-Carreras2022-bg}, where a patient's hospital journey is broken down into tasks and their dependencies.
For example, consider the following early discharge-planning QI intervention based on findings in \citet{Vossler2026-ew}:
\begin{quote}
    \textit{Any patient with complex discharge needs, as identified on the first day of their hospital stay, will be evaluated for discharge planning and meet with social work or case management within their first 24-48 hours. If the patient agrees, appropriate discharge resources will be contacted within 24 hours after meeting with the patient.
    Exclusion criteria include patients who cannot meet because of medical or legal reasons.}
\end{quote}
\noindent 
By creating a Gantt chart of the patient's journey, hospital QI teams can use their domain expertise to reason about how a described intervention would likely shift event timings and the overall LOS.
While experts may be accurate at estimating how an intervention shifts the events it directly targets (e.g., the time when the patient meets with social work or case management), its likely impact on more downstream events is harder to predict, as those event timings depend on interactions among multiple preceding events and cannot be determined through simple fixed rules.
For example, an earlier social work consultation may not substantially accelerate discharge if the information needed to determine the appropriate disposition is not yet available.
Similarly, moving a consultation from Monday to the preceding Friday may leave LOS unchanged if the anticipated discharge institution (e.g., a skilled nursing facility) does not process admissions over the weekend.

Given the limitations of qualitative analyses, QI teams often conduct complementary quantitative analyses.
The most common approach is to fit a regression model; however, such models generally estimate association not causation, and traditionally rely solely on tabular features, which alone may be insufficient for eliminating confounding bias \citep{Steverson2023-hu, Zeng2022-new}.
While a formal causal framework is desired, developing such a framework is non-trivial:
(i) Understanding a QI intervention and its effects requires complex clinical reasoning, and much of the relevant information resides in unstructured clinical notes rather than the tabular data.
(ii) The commonly assumed positivity assumption is difficult to satisfy as proposed QI interventions are, by definition, new policies that the hospital has never implemented and no patients have been exposed to \citep{Westreich2010-vc, Petersen2012-new}.
(iii) Because QI interventions often cover a broad and heterogeneous patient population, constructing a unified causal diagram that encompasses the all patient trajectories is a near-impossible task, as even the set of hospital events documented in clinical notes is open-ended. Furthermore, even if such a causal diagram existed, the available data is limited for accurately estimating the conditional relationships in the graph.

\begin{figure}
    \centering
    \includegraphics[width=\linewidth]{img/egg_overview}
    \vspace{-1.1cm}
    \caption{
    \textbf{Overview of expert-guided g-computation.}
    To estimate the effect of a candidate Quality Improvement (QI) intervention on the average length of stay (LOS) among eligible hospitalizations, the expert, or an aligned LLM, reads unstructured texts describing the candidate intervention and patient records (left) and uses expert reasoning to extract patient-specific Gantt charts, which we prove can be endowed with probabilities and causal semantics (middle).
    Through a combination of expert reasoning and statistical modeling, we simulate the counterfactual charts under the candidate intervention (right) and estimate the target estimand ``Average Time Saved'' $\tau = \mathbb{E}[Y(0) - Y(1)]$ where $Y(1)$ and $Y(0)$ are the LOS with or without intervention, respectively.
    Dashed outlines mark events whose timings are shifted in the counterfactual trajectory.
    }
    \label{fig:egg_overview}
\end{figure}

To resolve these technical gaps, this work combines qualitative and quantitative approaches to conduct what we refer to as ``expert-guided causal inference:'' expert judgment is elicited solely in areas where experts are accurate but data-driven modeling is not, while statistical models are employed where data-driven models are accurate but experts may not be (Figure~\ref{fig:egg_overview}).
The final causal contrast reflects the likely effect of an intervention based on both the data and expert opinion and, for settings where the expert is accurate, we establish identifiability of the true causal effect.
Proving this result requires carefully translating Gantt charts into directed acyclic graphs (DAGs) endowed with formal probabilistic and causal semantics for event processes (Section~\ref{sec:gantt}).
Given the resulting DAGs, we then introduce a variant of g-computation \citep{Robins1986-kr} that we refer to as \textit{expert-guided g-computation} or egg-computation, where expert manipulation of the Gantt chart is combined with statistical sampling of downstream events to generate counterfactual patient trajectories (Section~\ref{sec:egg}).
Because a fully human-powered approach is likely prohibitively time- and resource-intensive, we developed an LLM-assisted pipeline that supplies the expert reasoning needed to conduct egg-computation at scale (Section~\ref{sec:llm_scaling}).
This LLM pipeline can be easily adapted to new clinical settings or interventions, as it only depends on prompting rather than model retraining or fine-tuning.

To demonstrate the efficacy of the proposed egg-computation framework, we conduct extensive empirical evaluation.
We begin with comparing egg-computation with conventional causal inference methods in simulation studies, which show that egg-computation is more accurate when the causal structure or intervention mechanism varies substantially across patients.
Then in a real-world study of eleven candidate QI interventions at an urban safety-net hospital, we evaluate the use of LLMs to scale human-expert-guided causal inference.
We find that LLM extractions are highly concordant with human experts and even as accurate as outcome regression models, leading to highly concordant time saving estimates.
Clinical experts also found that LLM-generated reasoning traces provide useful insights into how well an intervention works and how it might be better designed.
Finally, while our focus is on estimating the effects of hospital QI interventions on patient trajectories, the methodology is broadly applicable to settings where Gantt charts are suitable for describing task dependencies and process timings \citep{Wilson2003-zq, Clark1922-ih} and the Related Work section in the Appendix places this contribution in the broader literature.

\section{Causal models for Gantt charts} \label{sec:gantt}

A Gantt chart (see Figure~\ref{fig:egg_overview}) is a collection of horizontal bars corresponding to events, where the horizontal axis corresponds to time.
Directed edges between bars represent dependencies between events, which are often categorized into start-start, finish-start, start-finish, and finish-finish relationships \citep{project2000guide}.
If we ignore the time axis and divide an event into a pair of (\texttt{start}, \texttt{end}) events with the \texttt{end} depending on the \texttt{start}, a Gantt chart can be drawn as a Directed Acyclic Graph (DAG), where each node represents the time of an event and directed edges describe the dependency between event times.
To represent the start and completion of a multi-step process, a Gantt chart must include source and sink nodes, which in our setting correspond to hospital admission and discharge/death, respectively.

While a Gantt chart corresponds to a DAG---a representation commonly used for causal models of various kinds \citep{wright1934method,pearl1995causal,Robins1986-kr}---this correspondence alone does not endow it with the causal semantics needed to address the inferential question of interest.
Rather, formalizing the causal model associated with a Gantt chart requires clarifying what edges mean exactly in this context, 
how an intervention changes the event process, and which features of that process are assumed to remain unchanged.
To this end, we introduce a causal model in terms of event waiting times that enables identification of intervention effects within subpopulations defined by $X$.

\paragraph{Notation.} We use upper-case letters (e.g., $X,Y,Z$) for random variables or vectors and use lower-case (e.g., $x,y,z$) for values.
We use Roman capital letters (e.g., $\sC, \sD$) for sets, which can appear as subscripts for indexing random vectors. 
For a node $k$ in a DAG, we use the standard graphical notation: $\Pa{k}:=\{j: j \to k\}$ for parents, $\Ch{k}:=\{j: j \leftarrow k\}$ for children, $\De{k}:=\{k\} \cup \{j: k \to \dots \to j\}$ for descendants, and $\Nd{k}:=\text{De}^{c}(k)$ for non-descendants. 
We use `$=_{d}$' to denote equality in distribution. 

\subsection{Causal model over waiting times} \label{sec:causal-wait}

\begin{figure}[h]
\centering
\includegraphics[width=\linewidth]{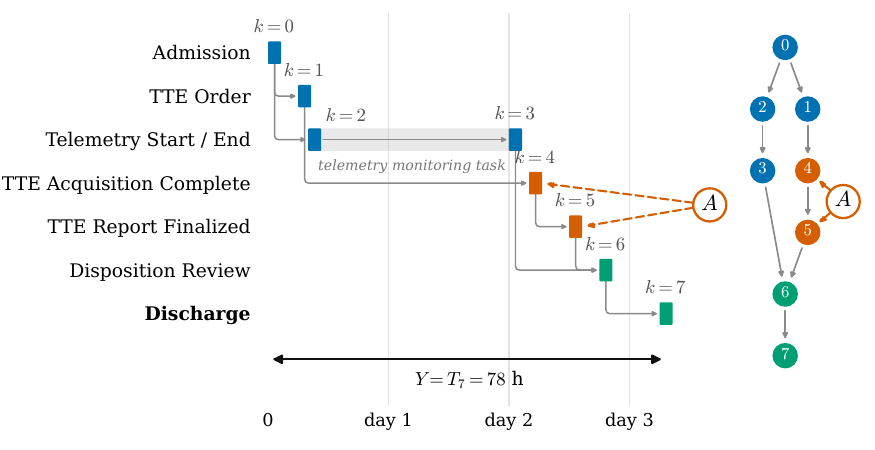}
\definecolor{typeI}{HTML}{D55E00}
\definecolor{typeII}{HTML}{0072B2}
\definecolor{typeIII}{HTML}{009E73}
\newcommand{\typeswatch}[1]{\protect\raisebox{-0.2ex}{\protect\tikz{\protect\node[rectangle, rounded corners=1pt, fill=#1, minimum height=2.2mm, minimum width=5mm, inner sep=0pt]{};}}}
\vspace{-1cm}
\caption{\textbf{An example inpatient trajectory drawn as a Gantt chart with its causal DAG.}
Each row is one event, placed at its event time in hours from admission, and gray arrows are the DAG edges; events are labeled $k = 0, \dots, K$ in topological order, with $K = 7$.
Node $A$ is not an event: it denotes a binary intervention that directly affects the waiting times of the transthoracic echocardiogram (TTE) acquisition and report (dashed arrows).
Bar colors give each event's relation to $A$: $k \in \{0,1,2,3\}$ are non-descendants of $A$ (\typeswatch{typeII}), $k \in \{4,5\}$ are children of $A$ (\typeswatch{typeI}), and $k \in \{6,7\}$ are descendants of $A$ that are not children of $A$ (\typeswatch{typeIII}).
Event `Admission' defines the start of wall time $T_0 = W_0 = 0$; `Discharge' defines the outcome of interest $Y := T_7$.
The observed data is generated under $A=0$. 
}
\label{fig:ex-gantt}
\end{figure}

Consider a random patient with baseline covariates $X \in \mathcal{X}$. 
The patient's Gantt chart, when ignoring the specific event times, can be described as a DAG over a finite set of events, which we denote as $\sD = \{0,1,\dots,K\}$; see \cref{fig:ex-gantt} for an example.
Given the DAG, an event $k$ with $\Pa{k} \neq \emptyset$ can only occur \emph{after} all the events in $\Pa{k}$ have occurred. 
We assume the labels are topologically ordered so that $j \in \Pa{k}$ implies $j < k$.
For simplicity, we will suppose both the baseline covariates $X$ and the set of events $\sD$ are \emph{conditioned on},  thereby fixing the DAG associated with the Gantt chart. 
In other words, the random patient is considered drawn from the subpopulation with the given $X$ and $\sD$.

For each event $k$, let $T_k$ be the \emph{wall time} at which $k$ occurs.
Let $W_k$ be the \emph{waiting time} for its occurrence, namely the time taken for it to occur since the occurrence of all of its preceding events. 
The wall time and waiting time are related through 
\begin{equation} \label{eq:T-W}
 T_k = \begin{cases} \max_{j \in \Pa{k}} T_j + W_k, \quad & \Pa{k} \neq \emptyset \\ 
W_k, & \Pa{k} = \emptyset \end{cases}.
\end{equation}
Given $W_k$ for each $k$, one can use \eqref{eq:T-W} to recursively compute $T_0, T_1, \dots, T_K$; conversely, given $T_k$ for each $k$, \eqref{eq:T-W} directly recovers $W_0, W_1, \dots, W_K$.
We let $T_k$ and $W_k$ take values in $[0, +\infty]$, where $T_k = +\infty$ means the event $k$ never occurs. 
If $T_k = +\infty$ for some event $k$, the ensuing events also never occur, i.e., $T_j = +\infty$ for all $j \in \De{k}$. 
We designate two special events: $k=0$ as the source event (`Admission') with wall time $T_0 = W_0 := 0$ and $k=K$ as the sink event (`Discharge'), which defines the outcome of interest $Y := T_K$.
All events are required to be an ancestor of event $K$, i.e., $k \to \dots \to K$ for each $k \neq K$; otherwise we can discard such events without loss of generality.

In our event-time setting, the DAG model posits the \emph{local Markov property} in terms of conditional independence
\begin{equation} \label{eq:local-markov}
W_k \indep T_{\Nd{k}} \mid T_{\Pa{k}}, X, \sD \quad k=0,1,\dots,K.
\end{equation}
This property should be differentiated from that of a standard Bayesian network model.
By \eqref{eq:T-W}, wall times obey the ordering $T_k \leq T_j$ for all $j \in \De{k}$; therefore, \eqref{eq:local-markov} cannot be restated as $T_k \indep T_{\Nd{k}} \mid T_{\Pa{k}}, X, \sD$ nor $W_k \indep W_{\Nd{k}} \mid W_{\Pa{k}}, X, \sD$ in general.

A QI policy is defined using a free-text description along with inclusion and exclusion criteria, like the early discharge intervention example described in \cref{sec:intro}.
To represent the causal impact of such a policy, the Gantt DAG is endowed with an additional non-event node $A$ that represents the \emph{binary intervention}:
$A=0$ represents the current policy under which our data is observed, while $A=1$ represents a new QI policy, which we do not observe but want to infer. 
We refer to the direct children of an intervention, denoted $\Ch{A}$, as \textit{intervention targets}, whose waiting times are directly modified by the new policy.
We assume $A$ is \emph{exogenous}, represented by $\Pa{A} = \emptyset$ in the augmented Gantt chart.
Every event $k$ is associated with the potential waiting times $W_k(0), W_k(1) \in [0, +\infty]$ which correspond to potential wall times $T_k(0), T_k(1) \in [0, +\infty]$ through relationships defined in a similar manner to \eqref{eq:T-W}, i.e.,
\begin{equation} \label{eq:T-W-a}
 T_k(a) = \begin{cases} \max_{j \in \Pa{k}} T_j(a) + W_k(a), \quad & \Pa{k} \neq \emptyset \\ 
W_k(a), & \Pa{k} = \emptyset \end{cases}, \qquad a=0,1,
\end{equation}
where again we let $W_0(0) = W_0(1) = 0$.
Implicit in our notation is the simplifying assumption of no interference or SUTVA \citep{rubin1980randomization}.
For example, if $A=1$ changes the CT schedule, we assume each patient is affected separately, even though in reality they may not be, due to resource constraints; see also \cref{sec:discuss} for more discussion.

$Y(a) := T_K(a)$ for $a=0,1$ is our potential outcome of interest measuring the total wall time from source to sink. 
Thus, the individual treatment effect is $Y(0) - Y(1)$, namely the time \emph{saved} by the new QI policy. 
As we will see in \cref{sec:egg}, the individual treatment effect is not identified.
Instead, we are interested in inferring the population-level average treatment effect
\begin{equation}\label{eq:ATE}
\tau := \E [Y(0) - Y(1)] = \E\left\{ \E[Y(0) - Y(1) \mid X, \sD] \right\},
\end{equation}
where the outer expectation is averaging over both $X$ and $\sD$.

Following the single-world/FFRCISTG causal semantics \citep{Robins1986-kr,richardson2013single} associated with a DAG, we associate our Gantt chart with a causal model defined by \eqref{eq:T-W-a} along with the following assumptions.
Recall that $T_k$ and $W_k$ are observed only under the current policy $A=0$. 
This differs from common clinical trial and observational study settings, where one observes both $A=0$ and $A=1$ arms. 

\begin{assumption}[Consistency] \label{assump:consistency}
For $k=0,1,\dots,K$, we have $T_k = T_k(0)$ and $W_k = W_k(0)$.
\end{assumption}
\begin{assumption}[Exogenous intervention] \label{assump:exo}
For $a=0,1$, we have $W_0(a), W_1(a), \dots, W_K(a) \indep A \mid X , \sD$.
\end{assumption}
\begin{assumption}[Exclusion restriction] \label{assump:exclusion}
For $k=0,1,\dots,K$,
\begin{enumerate}
\itemsep-1em 
\item if $k \in \Nd{A}$, then $W_k(1) = W_k(0)$.
\item if $k \in \De{A} \setminus \Ch{A}$, then  $W_k(0) \mid T_{\Pa{k}}(0), X, \sD =_{d} W_k(1) \mid T_{\Pa{k}}(1), X, \sD$.
\end{enumerate}
\end{assumption}

\begin{assumption}[Local Markov property] \label{assump:local-markov}
For $k=0,1,\dots,K$, we have $W_k(a) \indep T_{\Nd{k}}(a) \mid T_{\Pa{k}}(a), X, \sD$ for $a=0,1$, where we let $A$ be excluded from $\Nd{k}$ and $\Pa{k}$.
\end{assumption}

Under \cref{assump:consistency}, we have $Y(0) = T_K$ and hence $\tau = \E T_K -  \E\left\{ \E[T_K(1) \mid X, \sD] \right\}$ by \eqref{eq:ATE}, where $T_K$ is observed. 
Therefore, we can estimate $\tau$ if $T_K(1)$ can be imputed based on  observed data under $A=0$.
By the assumptions above, the distribution of the counterfactual timings can be factorized as
\begin{align}
    p(T_\sD(1) \mid X, \sD) = \prod_{k = 0}^{K} p(T_k(1) \mid T_{\Pa{k}}(1), X, \sD)
    \label{eq:counter-factor}
\end{align}
so \cref{sec:egg} will discuss how to obtain each factor in \eqref{eq:counter-factor} so to recursively impute $T_\sD(1)$. 

\begin{remark}
For ease of exposition, our presentation has treated the event set $\sD$ in the Gantt chart as fixed and our notation has implicitly assumed that $\sD(0) = \sD(1)$ for each patient.
While this can be overly restrictive, the following causal identification results still hold if we relax this assumption to distributional equivalence, i.e., $\sD(0) \mid X =_d \sD(1) \mid X$, with \cref{assump:consistency,assump:exo,assump:exclusion,assump:local-markov} adjusted accordingly.
Such an assumption is plausible primarily for \textit{operational} interventions in healthcare, such as speeding up the timing of an imaging or consult order or increasing the availability of ancillary services over the weekend.
In contrast, interventions that drastically change a patient's trajectory (e.g., swapping the treatment from chemotherapy to surgery) would not satisfy this assumption and should not be analyzed through this framework.
\end{remark}

\section{{\scalebox{1.8}{\twemoji{egg}}} \underline{E}xpert-\underline{G}uided \underline{g}-computation, plus LLMs}
\label{sec:egg}

Having established patient-specific Markovian DAGs as valid causal models for Gantt charts, we now further leverage this connection between Gantt charts and causal DAGs.
In particular, we note that expert-manipulation of timestamps in a Gantt chart shares many similarities to using g-computation to simulate counterfactual node values on causal DAGs \citep{Robins1986-kr}.
Merging these ideas, we introduce ``expert-guided g-computation'' (\textit{egg}-computation): we can identify the causal estimand $\tau$ defined in \eqref{eq:ATE} through a combination of data-driven modeling and expert input, where the latter supplies information about the interventional distribution that cannot be learned from observational data.

\subsection{Assumptions and Identifiability}

Because $\E Y(0)$ can be identified by taking a simple average of the observed outcomes $Y$, this section focuses on identifying $\E Y(1)$.
Our general strategy is to simulate the counterfactual outcome $Y(1)$ by recursively imputing the intermediate waiting times $W_k(1)$ for $k\in \sD$ and iteratively applying definition \eqref{eq:T-W-a}.
This begins with the expert supplying accurate patient-specific DAGs.

\begin{assumption}[Expert DAG specification] \label{ass:dag_recovery} 
Given the patient's baseline covariates $X$ and the observed event sequence, the Gantt DAG posed by the expert is correctly specified. 
That is, $\Pa{k}$ for each event $k$ and the intervention targets $\Ch{A}$ are correctly specified. 
\end{assumption}

Although this assumption is untestable, we believe it is plausible for many operational interventions, because clinical notes combined with tabular data often contain sufficiently rich detail for experts to understand the causal relationships between different events \citep{Vossler2026-ew}.$^*$
\footnotetext{Indeed, this is why clinician chart review is a common standard for making causal judgments, like determining the cause of an adverse event \citep{Griffin2009-new, Classen2011-new} or the reason for a readmission \citep{Auerbach2016-new}.}
For instance, tabular data may provide a sequence of events and timestamps (e.g., ``10:30 AM MRI completed, 11:45 AM IV antibiotics started'') while clinical notes often describe how these events are linked (e.g.,  ``starting IV antibiotics for the deep-tissue infection seen on MRI'').
In contrast, this assumption may be less plausible if the intervention changes complex treatment regimens.
Thus, prior to conducting egg-computation, one should assess whether Assumption~\ref{ass:dag_recovery} can be (approximately) satisfied.

Given the DAG, the next step is to determine how best to impute the waiting times.
Since we have access to both observational data and an expert, we should choose the optimal information source for  different nodes, which we divide into three mutually exclusive types:

\noindent \textbf{Type I: Events that are not descendants of intervention $A$, i.e., $k \in \Nd{A}$.}
This is the simplest option.
By \cref{assump:exclusion}, waiting times of Type I nodes can be directly copied from the factual timeline, i.e., $W_k(1) = W_k$.
As these nodes are completely unaffected by the intervention (as all ancestors are also unaffected), we also have $T_k(1) = T_k$ for all Type I nodes.

\noindent \textbf{Type II: Events that are descendants of $A$ but not direct children, i.e., $k \in \De{A} \setminus \Ch{A}$.}
Assuming the conditional distribution of the observed waiting times is correctly learned, i.e., $W_k \mid T_{\Pa{k}}, X, \sD$, the model can be used to directly impute $W_k(1)$ since these two distributions are equivalent by \cref{assump:exclusion}.
We state this modeling assumption below and discuss estimation in a later section.

\begin{assumption}[Timing Model] \label{assump:timing_model}
We have a Type~II timing model that equals the true conditional distribution $p\left(W_k \mid T_{\Pa{k}}, X, \sD\right)$ for all $k \in \De{A} \setminus \Ch{A}$.
\end{assumption}

\noindent \textbf{Type III: Events that are direct children of intervention $A$, i.e., $k \in \Ch{A}$}
The waiting times directly affected by a new QI policy are never observed, so we must rely on expert guidance.
To achieve identification, we must assume our expert is accurate at imputing, as formalized below:
\begin{assumption}[Expert Counterfactual Simulation]
\label{assump:expert}
The expert-specified distribution $p_{\text{expert}}$ equals the counterfactual waiting time distribution $p\left(W_k(1) \mid T_{\Pa{k}}(1), X, \sD \right)$ for all $k \in \Ch{A}$.
\end{assumption}
\noindent In practice, this assumption need not hold exactly. If the intervention's true direct effects are unknown, the expert can hypothesize a set of functions that vary the size of the intervention's direct effects and weight the resulting estimates against their prior belief.

Now that counterfactual timestamps can be simulated for all three types of nodes along the patient's DAG in topological order, we summarize the entire pipeline from observed Gantt charts to sampling counterfactual outcomes in Box~\ref{box:egg}:
\begin{tcolorbox}[title={Box \twemoji{egg}: Expert-guided g-computation}]
\eggboxlabel
\begin{enumerate}
\item \textbf{DAG Construction}: The expert first determines whether the intervention, based on its inclusion and exclusion criteria, applies to the hospitalization.
If not, the intervention node $A$ has no children and the counterfactual trajectory equals the observed one.
Otherwise, the expert analyzes the observed event sequence under $A = 0$ and constructs the causal DAG per Assumption~\ref{ass:dag_recovery}.
Each node is classified as Type~I, II, or III based on its relationship to the intervention.

\item \textbf{Expert-guided Simulation}: Following the topological order of the DAG, generate counterfactual event times $t'$ for each patient under $A=1$:
\begin{itemize}
\item For Type I nodes: copy observed times directly from factual sequence
\item For Type II nodes: predict timing using models trained on observational data, conditioned on the (now-modified) parent timings
\item For Type III nodes: sample from the expert-supplied distribution $p_{\text{expert}}$
\end{itemize}

\item[] \hspace{-0.3cm} \textbf{Output}: Take the average difference between the factual outcome $Y_i$ and the imputed counterfactual outcome $Y_i'$, i.e.,
\begin{align}
    \hat{\tau} = \frac{1}{n} \sum_{i=1}^{n} \left(Y_i - Y_i'\right)
    \label{eq:ate_estimator}
\end{align}
\end{enumerate}
\end{tcolorbox}

The following theorem gives the guarantee of \textit{egg}-computation; see Appendix~\ref{sec:proofs} for its proof. 

\begin{theorem}[Correctness of egg-computation]
\label{thm:correctness}
Suppose \cref{assump:consistency,assump:exo,assump:exclusion,assump:local-markov,ass:dag_recovery,assump:timing_model,assump:expert} hold. 
Then, $Y_i'$ generated by the \textit{egg}-computation is identically distributed as the counterfactual time $Y_i(1)$. 
Consequently, $\hat{\tau}$ in \eqref{eq:ate_estimator} is an unbiased estimator of the average treatment effect $\tau$ in \eqref{eq:ATE}.
\end{theorem}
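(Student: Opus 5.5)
We will show, conditional on $(X,\sD)$, that the imputed vector $T'_\sD=(T'_0,\dots,T'_K)$ produced in Box~\ref{box:egg} has the same joint law as the counterfactual vector $T_\sD(1)$, paired with the observed $T_\sD$; the marginal claim $Y'=_d Y(1)$ then follows by taking the $K$-th coordinate. The argument is an induction along the topological order $k=0,1,\dots,K$. The inductive hypothesis is
\[
(T_{\Nd{A}}, T'_{\{0,\dots,k-1\}}) =_d (T_{\Nd{A}}, T_{\{0,\dots,k-1\}}(1)) \mid X,\sD .
\]
Keeping $T_{\Nd{A}}$ in the joint statement is convenient because Type~I copies come from the factual chart. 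By \cref{ass:dag_recovery}, the node types and parent sets used in the algorithm are the true ones, so each step uses the correct $\Pa{k}$. By \cref{assump:consistency}, the observed quantities equal $T_k(0)$ and $W_k(0)$.

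\textbf{Inductive step, by node type.} Since $T'_k$ is a deterministic function of $T'_{\Pa{k}}$ and $W'_k$ via \eqref{eq:T-W-a}, it suffices to show that $W'_k$ given $(T'_{\{<k\}},T_{\Nd{A}})$ has the same conditional law as $W_k(1)$ given $(T_{\{<k\}}(1),T_{\Nd{A}})$.
\begin{enumerate}
\item \emph{Type I.} Every ancestor of $k$ is also in $\Nd{A}$. By \cref{assump:exclusion}(1), $W_j(1)=W_j(0)=W_j$ for all ancestors $j$, so recursing \eqref{eq:T-W-a} gives $T_k(1)=T_k$ almost surely, which is exactly what the algorithm copies.
\item \emph{Type II.} The algorithm draws $W'_k$ from $p(W_k\mid T_{\Pa{k}},X,\sD)$ evaluated at $T'_{\Pa{k}}$, which is valid by \cref{assump:timing_model}. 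By \cref{assump:exclusion}(2), this kernel equals the law of $W_k(1)\mid T_{\Pa{k}}(1),X,\sD$. By \cref{assump:local-markov} with $a=1$, this in turn equals the law of $W_k(1)\mid T_{\{<k\}}(1),X,\sD$, because $\{<k\}\subseteq\Nd{k}$.
\item \emph{Type III.} The same reasoning applies with $p_{\text{expert}}$ in place of the timing model, using \cref{assump:expert}.
\end{enumerate}
The chain rule then extends the joint equality to index $k$.

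\textbf{Main obstacle: conditioning on the factual $T_{\Nd{A}}$.} For Type II and III nodes, local Markov conditions on $T_{\{<k\}}(1)$ but not on factual quantities such as $T_j$ for $j\in\Nd{A}$ that are not predecessors of $k$. I will handle this in two steps. First, $T_j(1)=T_j$ for all $j\in\Nd{A}$ by the Type I argument, and these events are non-descendants of $k$, since $k\in\De{A}$ while $j\in\Nd{A}$. Second, I need
\[
W_k(1)\indep T_{\Nd{k}}(1)\mid T_{\Pa{k}}(1),X,\sD,
\]
and $\Nd{k}$ includes all of $\Nd{A}$, so the extra conditioning is absorbed. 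If this route is too delicate, the fallback is to drop the joint-with-factual statement and prove only $T'_\sD=_d T_\sD(1)$ marginally. That follows directly from the factorization \eqref{eq:counter-factor}: each algorithmic sampling kernel matches the corresponding factor, and Type I factors are degenerate copies whose marginal law equals that of $T_\sD(0)$ on $\Nd{A}$ by \cref{assump:exclusion}(1). This marginal statement is all the theorem needs.

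\textbf{Unbiasedness.} By \cref{assump:consistency}, $\E[Y_i]=\E\,Y(0)$. The distributional identity gives $\E[Y'_i\mid X,\sD]=\E[Y(1)\mid X,\sD]$, and averaging over $(X,\sD)$ gives $\E[Y'_i]=\E\,Y(1)$. Linearity applied to \eqref{eq:ate_estimator} then yields $\E\hat\tau=\tau$, assuming the expectations are finite; events with $T_k=+\infty$ are handled by requiring integrability of $Y$.
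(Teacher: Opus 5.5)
Your proposal is correct and follows essentially the same route as the paper. Your Type~I and Type~II steps are \cref{lem:nd} and \cref{lem:de}, your induction along the topological order is the sequential form of the chain-rule factorization \eqref{eq:factorize-g} in \cref{thm:g}, and the unbiasedness step is the same linearity argument; the absorption you describe for the ``main obstacle'' (conditioning on all of $T_{\Nd{A}}(1)$ is removed by \cref{assump:local-markov}, since $\Nd{A}\subseteq\Nd{k}$ for $k\in\De{A}$) is straightforward rather than delicate, and it is the step the paper leaves implicit when it writes those factors conditioned only on $T_{[k-1]}(1)$, so your primary route goes through without the fallback.
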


\subsection{Scaling up with LLMs}
\label{sec:llm_scaling}

Compared to classical causal inference frameworks, egg-computation requires substantially more expert input.
In some sense, the ``amount'' of input needed for egg-computation scales with the number of observations: based on the intervention description and each patient's clinical chart, the expert must (i) decide whether and when the intervention applies, (ii) construct a DAG and (iii) assist with event time simulation.
While this allows egg-computation to conduct expert-guided causal reasoning in substantially more complex settings, a key question is how one can feasibly supply this expert input in practice.

With the major advancements in reasoning abilities of LLMs, it is now practically feasible to scale egg-computation.
Below, we discuss how LLMs are prompted to supply expert input for the two steps of egg-computation: eligibility determination and DAG construction in Step 1 and expert-guided simulation in Step 2.
A more in-depth discussion of the pipeline's implementation is included in Appendix~\ref{sec:dag_pipeline}, and all prompts will be available in an open-source software package.
\subsubsection{Step 1}
\label{sec:step_1}

Step 1 of egg-computation determines whether the intervention applies to a given hospitalization at all and, if so, how to construct the DAG.
Determining eligibility is a relatively simple information extraction and reasoning task \citep{Agrawal2022-me, Singhal2023-sc, Kanjee2023-new}, so an LLM needs only minimal prompt tuning to accurately assess whether a patient is eligible for an intervention.
However, DAG construction is harder.
A single zero-shot LLM prompt has only been shown to accurately extract small causal DAGs ($\le 5$ nodes) in well-understood settings \citep{Kiciman2024-gz}, but this strategy is unlikely to yield valid, let alone accurate, DAGs when both the number of DAGs and the number of nodes per DAG are much larger.

Substantial engineering is required to translate manual, and often highly iterative, approaches of DAG construction into a set of LLM prompts with reliable performance.
At a high-level, our LLM pipeline is composed of five steps (Figure~\ref{fig:dag_construction}), where each step is one LLM call that refines the set of nodes and/or edges in the working DAG.
The first step prompts the LLM to reason about the patient's clinical course and draft the causal structure; the second and third steps formalize this draft as a Gantt chart by deciding the nodes and then the edges, respectively; and the final two steps construct the DAG in a structured format.
This multi-step process was designed to efficiently spend the LLM's reasoning tokens by prioritizing different aspects of DAG construction.
Early steps in the pipeline prioritized clinical and causal reasoning and later steps prioritized structuring the output to match the expected encoding and double-checking the DAG structure.
Throughout, a key design principle when crafting the LLM prompts was to use formal statistical language, as this precision encourages the LLM to use similarly rigorous reasoning.

\begin{figure}
\centering
\includegraphics[width=0.99\linewidth]{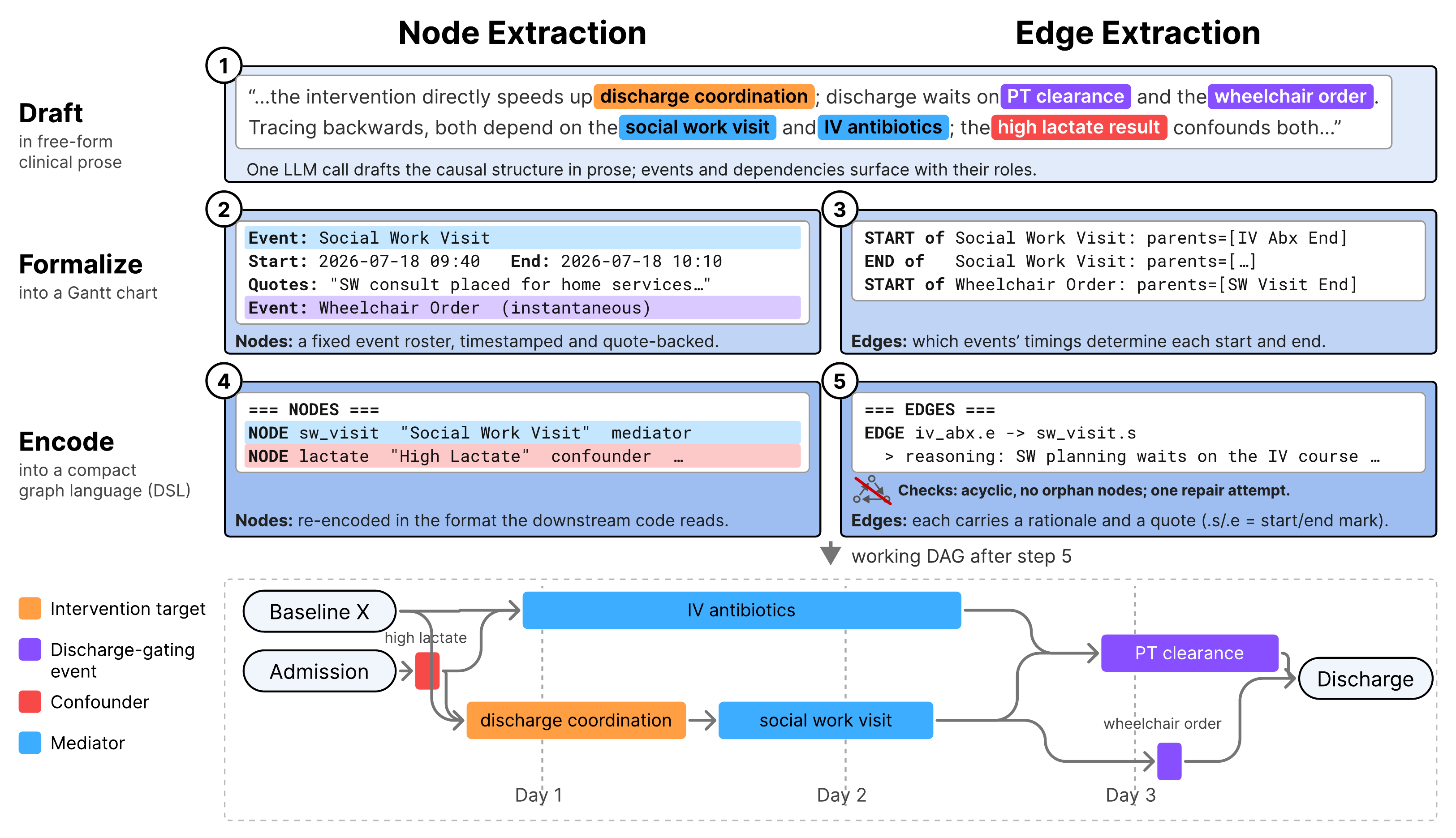}
\caption{\textbf{Five-step LLM pipeline for constructing patient DAGs}. Each numbered box corresponds to each step's LLM call and shows an example of the expected output.
Given an unstructured patient chart, the LLM conducts free-form clinical reasoning (step 1), creates a Gantt chart (step 2 creates nodes, step 3 creates edges), and finally encodes the graph into a parseable format (step 4 creates nodes, step 5 creates edges). 
LLM outputs are required to contain quotes from the patient chart, so that LLM reasoning is transparent and auditable.
The pipeline also includes checks to ensure validity of the final graph.
}
\label{fig:dag_construction}
\end{figure}

Central to this pipeline is designing step-by-step instructions to the LLM for deciding which nodes to add and when there were enough.
The primary goal is to include enough nodes for adequate confounder adjustment so that conditional exchangeability in Assumption~\ref{ass:dag_recovery} is satisfied.
To do this, prompts were based on the iterative graph expansion procedure of \citet{Guo2026-wl} and the disjunctive criterion \citep{VanderWeele2019-ws}, which starts from a graph containing only the exposure and outcome and expands it until causal identification can be achieved.
To help the LLM find these confounders, the prompts instruct it to walk backwards from the intervention targets (Type III nodes) and from the sink node, enumerating the events encountered along these paths.
A secondary goal was to include enough nodes in the DAG to improve the precision of timing estimates.
For this, the prompts encouraged including mediating events along the traced paths, as these shorten the gaps between consecutive nodes, and shorter gaps are easier to model because they have higher data support and lower variability (For instance, given the causal pathway ``initiation of IV antibiotics'' $\rightarrow$ ``end IV antibiotics'' $\rightarrow$ ``social worker visit'' $\rightarrow$ ``order a wheelchair'', there might be very few patients for whom we observe the initial and final event but likely many patients for whom we observe one or more consecutive pairs from this causal pathway).
Finally, because each additional node increases the LLM costs, prompts were designed to limit overly large DAGs.
The final prompts led to DAGs with an average of 14 nodes and 21 edges, which was manageable in terms of cost and query time.

The final pipeline used GPT-5-mini for generating the patient DAGs.
For prompt tuning, we used a small training dataset composed of a synthetic test patient and three real patients.
Outputs from the LLM pipeline were verified through a custom HTML interface that visualized the extracted graphs, the LLM's reasoning, and the raw patient timeline.
We developed a PHI-compliant annotation interface as well to evaluate agreement between our human experts and the LLM (see Appendix~\ref{sec:annotation_protocol} for details).

\subsubsection{Step 2}
\label{sec:step_2}

Step 2 of \textit{egg}-computation needs to sample timings for Type II and III nodes.
Below, we discuss how timing models for these nodes can be scaled up using LLMs while ensuring the timing estimates are aligned with the expert and real-world data distributions.

\paragraph{Type II nodes.}
To simulate timings of Type II nodes given their parent nodes, we consider fitting a pooled model that takes as input descriptions of the parent events, timings of the parent events, and a description of the child event.
The training data for such a pooled model is obtained by constructing the observed Gantt charts for patients in the training dataset and collecting their Type II nodes.
Pooling across patients is needed because a single hospitalization only contributes one draw from each waiting-time factor.
Since exact repeats of a conditioning event essentially never recur, the pooled model relies on a similarity assumption:
events with similar descriptions, parent timings, and patient context follow similar timing distributions (see Remark~\ref{rem:pooling} in the Appendix for the formal pooling assumption and its connection to Theorem~\ref{thm:correctness}).

We consider three ways to fit a model to this pooled dataset.
One option is to extract predefined tabular features from the free-form event descriptors and then train corresponding tabular models (e.g., random forests).
Another option is to train models that take in the original event descriptors directly, such as ridge regression with sentence embeddings as features.
A third option, based on in-context learning (ICL) \citep{Brown2020-qb}, prompts an LLM to use its reasoning abilities and prior knowledge to predict the most likely timing, given the timings observed in similar training examples.
We recommend choosing among these options by cross-validation on the pooled dataset.
In the experiments, we compare all three options under leave-one-out cross validation, with GPT-5 as the ICL option.

\paragraph{Type III nodes.}
There are generally two options for making an LLM output timestamp distributions that are aligned with the expert distribution $p_{\text{expert}}$ at a Type III node.
One option is to fine-tune the LLM or train a calibration model (e.g., using isotonic regression) if experts have annotated a sufficiently large number of samples, but this is infeasible for many hospital QI teams.
The other option, which we take here, is to specify the intervention in as much detail as possible such that the unknown expert-supplied distribution is precisely defined.
As demonstrated in prior work, ambiguity in the prompt is indeed one of the major drivers of misalignment between human and LLM outputs \citep{Tamkin2023-new, Subramonyam2024-at, Kothari2026-zo}.
Consequently, each intervention includes details on the motivation for the intervention, what the intervention is, examples of what the expected timings are under the intervention, and inclusion/exclusion criteria (see example prompts in Section~\ref{sec:type_i_prompt} of the Appendix).
After the initial prompt is created based on the detailed intervention, an iterative prompt-tuning procedure on a small set of patient DAGs can be used to further align the LLM-sampled event times with those from the human expert.
In practice, we suggest conducting additional sensitivity analyses that vary the direct timing impact of the intervention, either by perturbing the prompt or applying varying shrinkage factors on the LLM-anticipated effect.

\section{Experiments}

Having established the theoretical properties of egg-computation, we now study its empirical performance.
First, we study egg-computation assuming access to an oracle expert, using simulation studies to understand how it compares to classical causal inference methods.
Then on real-world patient timelines, we apply egg-computation with frontier LLMs to estimate the effects of candidate QI interventions on LOS and evaluate concordance of the LLM pipeline with human experts.

\subsection{Simulation studies comparing egg-computation to classical tabular causal inference methods}
\label{sec:simulation}

We have made two claims regarding the advantages of egg-computation over classical causal inference methods thus far.
First, egg-computation differs qualitatively in that it applies even in settings with limited or no data support, because it can additionally draw on expert input.
Second, egg-computation has a quantitative advantage in that its estimates can be more accurate through more faithful modeling of causal dependencies, in part because the method analyzes both structured and unstructured data.
While the former point is evident, the latter point has yet to be quantified and requires empirical study.

We present simulation results comparing egg-computation to existing methods for the following hypothetical clinical scenario (Figure~\ref{fig:sim_dag}).
After patient admission at time $T_{\text{Adm}} = 0$, this hypothetical hospital orders a diagnostic CT study at time $T_{O_{\text{CT}}}$ and generates a report at time $T_{R_{\text{CT}}}$; the same is done for an MRI study, with corresponding timestamps $T_{O_{\text{MRI}}}$ and $T_{R_{\text{MRI}}}$.
Depending on results from one or both of the studies, a final decision regarding the patient's treatment will be made at some time $T_{\text{Decision}}$.

Our goal is to understand the effect of a candidate intervention on the time to this final decision.
As shown in Figure~\ref{fig:sim_dag}, the causal dependencies between these event timings vary between patients due to patient baseline factors, in that (i) the CT report may require the MRI report coming out first or vice versa and (ii) the final decision may depend on one or both of these imaging reports.
For this simulation study, we generate $n=4{,}000$ patients whose edges are uniformly sampled at random; as such, half of the patients have the CT $\rightarrow$ MRI dependency while the other half have the MRI $\rightarrow$ CT dependency.
We study two candidate interventions, one that caps the time from CT order to report generation at 12 hours and another that caps the time to MRI report generation in the same fashion.
Each patient's sampled DAG is rendered as a synthetic clinical note that records a summary of the true causal structure. %

We compare estimation methods that vary in the DAGs used---either a single shared DAG (\texttt{Single-DAG}) versus extracted subject-specific DAGs (\texttt{DAG-extractor})---and how event timings are imputed---either by reusing the observed wait times as commonly done in qualitative analyses of Gantt charts (\texttt{Gantt-timing}) or by fitting a regression model for wait times (\texttt{OLS-timing}, \texttt{kNN-timing}) (see Appendix~\ref{sec:sim_details} for details).
Performance is measured in terms of estimation bias.

\begin{figure}[t]
\centering
\begin{tikzpicture}[
    node distance=1.5cm and 2cm,
    every node/.style={font=\small},
    order/.style={circle, draw, minimum size=0.8cm},
    report/.style={rectangle, draw, minimum width=1.2cm, minimum height=0.6cm},
    decision/.style={rectangle, draw, rounded corners, minimum width=1.5cm, minimum height=0.6cm},
    slack/.style={circle, draw, minimum size=0.6cm}
]
\node[order] (adm) {Adm};

\node[order, above right=0.8cm and 1.5cm of adm] (oct) {$O_{\text{CT}}$};
\node[order, below right=0.8cm and 1.5cm of adm] (omri) {$O_{\text{MRI}}$};

\node[report, right=1.5cm of oct] (ct) {$R_{\text{CT}}$};
\node[report, right=1.5cm of omri] (mri) {$R_{\text{MRI}}$};

\node[decision, right=2.5cm of $(ct)!0.5!(mri)$] (y) {$\text{Decision}$};
\draw[->, thick] (adm) -- (oct);
\draw[->, thick] (adm) -- (omri);
\draw[->, thick] (oct) -- (ct);
\draw[->, thick] (omri) -- (mri);
\draw[->, dotted, thick] (ct) -- (y);
\draw[->, dotted, thick] (mri) -- (y);
\draw[->, thick, dashed, bend left=20] (ct) to node[midway, right, xshift=2pt] {} (mri);
\draw[->, thick, dashed, bend left=20] (mri) to (ct);

\end{tikzpicture}
\caption{Causal graph generation scheme for the simulation study.
Each patient's causal graph has all solid edges, only one of the dashed edges, and one or both of the dotted edges.
}
\label{fig:sim_dag}
\end{figure}

\begin{table}[t]
\centering
\caption{Simulation results.
The signed bias (hours) of the estimated average time saved, $\hat{\tau} - \tau$, under each intervention, reported as mean $\pm$ standard error over the 50 replications.
A positive value overstates the speed-up the intervention delivers and a negative value understates it.
Each method pairs a DAG (patient-specific \texttt{DAG-extractor} versus a shared \texttt{Single-DAG} with an assumed report ordering) with a timing update (\texttt{Gantt-timing} reuses the factual time differences exactly; \texttt{OLS-timing} and \texttt{kNN-timing} fit models for the timing gaps).
\texttt{DAG-extractor + Gantt-timing} is the oracle configuration in this simulation, since the patient-specific DAG is correct and exact propagation matches the data-generating process.
The Priority-CT and Priority-MRI columns are the two interventions.
}
\label{tab:sim_results}
\begin{tabular}{lcc}
\toprule
& \multicolumn{2}{c}{ATE Bias (h)}\\
\cmidrule(lr){2-3}
Method & Priority-CT & Priority-MRI \\
\midrule
\texttt{DAG-extractor + Gantt-timing} & $0.00 \pm 0.00$ & $0.00 \pm 0.00$ \\
\texttt{DAG-extractor + OLS-timing} & $-0.35 \pm 0.02$ & $-0.22 \pm 0.02$ \\
\texttt{DAG-extractor + kNN-timing} & $2.75 \pm 0.01$ & $2.11 \pm 0.01$ \\
\midrule
\texttt{Single-DAG + Gantt-timing} (CT$\to$MRI) & $13.01 \pm 0.06$ & $-2.56 \pm 0.07$ \\
\texttt{Single-DAG + Gantt-timing} (MRI$\to$CT) & $-5.37 \pm 0.04$ & $14.19 \pm 0.07$ \\
\texttt{Single-DAG + OLS-timing} (CT$\to$MRI) & $10.73 \pm 0.08$ & $-3.05 \pm 0.16$ \\
\texttt{Single-DAG + OLS-timing} (MRI$\to$CT) & $7.06 \pm 0.11$ & $5.47 \pm 0.16$ \\
\bottomrule
\end{tabular}

\end{table}

Unsurprisingly, Table~\ref{tab:sim_results} shows that \texttt{DAG-extractor + Gantt-timing} achieves zero bias, since \texttt{DAG-extractor} recovers the true patient-specific structure and \texttt{Gantt-timing} uses the true timing model.
The worst performing methods are those that rely on \texttt{Single-DAG}, where the shared DAG is always wrong for a majority of the cohort; as a result, the estimation bias can reach up to 14 hours.
In contrast, imperfect timing models with the patient-specific DAG extractor added comparatively little bias: at most 0.4 hours for \texttt{OLS-timing} and 2.8 hours for \texttt{kNN-timing}.
Thus, this simulation demonstrates that when patient-specific causal structure is heterogeneous and latent, methods that impose a single population-level DAG incur large and unpredictable biases.
Egg-computation can eliminate structural misspecification and achieve unbiased estimation when it can accurately recover the subject-specific DAG auxiliary data such as clinical notes.

\subsection{Real-world Evaluation of Candidate QI interventions}
\label{sec:results}

Building on prior LLM-based analyses of major bottlenecks and delays at the Zuckerberg San Francisco General Hospital (ZSFG) \citep{Vossler2026-ew}, we analyze eleven candidate QI interventions that the hospital is considering in response.
Described in more detail in Appendix~\ref{sec:intervention_definitions}, candidate interventions range from accelerating completion of high-priority orders, advancing the initiation of certain processes, and increasing weekend services.
Here we present quantitative and qualitative evaluations of the egg-computation pipeline on this real-world dataset.

Using the same data pipeline as \citet{Vossler2026-ew}, we extracted 2{,}193 inpatient timelines composed of clinical notes, clinical orders and events, and laboratory/test results for adult hospitalizations from the five most common diagnoses. %
Applying the LLM-based egg-computation pipeline, we first screened each hospitalization for whether it was eligible for an intervention and then  estimated the causal effect on average LOS for the filtered population.
To control computation costs, egg-computation pipeline was run for at most 30 eligible hospitalizations per diagnosis group per intervention, resulting in a total of 1{,}531 LLM-generated DAGs.
The experiment's LLM API cost totaled \$490; a cost breakdown is provided in Appendix~\ref{sec:computational_cost}.

\subsubsection{Expert agreement with LLM-constructed DAGs}
For LLMs to successfully scale up egg-computation, human experts should generally agree with DAG generations from the LLM pipeline.
To evaluate this, five experts from the hospital's QI team (PV, JO, AH, LZ, JF) were asked to evaluate LLM-generated DAGs for two candidate interventions: early discharge planning and disposition-critical imaging studies. %
We evaluated concordance between DAG edges extracted by the LLM pipeline versus those indicated by expert annotators, to assess the reliability of using LLMs for scaling up human experts. %

Twelve hospitalizations were annotated per intervention, so a total of 24 intervention-hospitalization pairs were annotated, where 8 were double-annotated to measure inter-annotator agreement.
Given that patient timelines often spanned multiple days and averaged roughly 16{,}000 words, we designed a structured annotation workflow and interface that would balance time/resource constraints against potential biases in the annotation process.
Our final annotation workflow is based on \citet{Vossler2026-ew}: we asked annotators to check which extracted nodes should be direct children of the intervention $A$ (i.e., check which nodes are Type III) and which edges should be included for nodes downstream of the intervention (i.e., check input edges to all Type II nodes); see the full protocol in Appendix~\ref{sec:annotation_protocol}.
A total of 476 node verdicts and 243 edge verdicts were collected.

The annotation study revealed high concordance between the nodes and edges generated by the LLM pipeline and the human annotators (Table~\ref{tab:llm_is_ok}).
Across both interventions, specificity and precision were consistently high.
The annotators mostly agreed with the structure the LLM included.
Recall was also high but generally lower than specificity and precision.
This is the expected, as the LLM is prompted to only include nodes and edges it can confidently justify from the timeline (Section~\ref{sec:step_1}).
Inter-annotator agreement was also high (Table~\ref{tab:interannotator} of the Appendix), with an average agreement rate of over 80\%.
Most importantly, we re-ran egg-computation on LLM-generated versus annotator-corrected DAGs (Table~\ref{tab:llm_is_ok}, final row). 
The mean difference in the estimated average time saved was $-2.6$ hours for early discharge planning and $8.1$ hours for disposition-critical imaging, with 19 of the 24 pairs yielding identical estimates, demonstrating how LLM-based estimates are indeed quite close to human-based estimates.

\begin{table}
    \centering
    \vspace{-0.3cm}
    \caption{Concordance between the LLM-generated and annotator-corrected DAGs, with 95\% confidence intervals.
    For intervention targets, we report the share of events the annotator judged non-targets that the LLM also left unflagged (specificity) and the share of annotator-endorsed targets the LLM endorsed (recall).
    For input edges to nodes downstream of intervention, we report the proportion of LLM-proposed edges that were endorsed (precision) and the proportion of annotator-endorsed edges present in the LLM's graph (recall).
    The final row reports the mean difference (hours) in the estimated average time saved when egg-computation was rerun on LLM-generated versus the annotator-corrected DAGs.
    }
    \label{tab:llm_is_ok}
    \resizebox{\linewidth}{!}{%
    \begin{tabular}{lcc}
    \toprule
     & Early discharge planning & Disposition-critical imaging \\
    \midrule
    Intervention-target specificity (\%) & 96.5 [91.9, 100.0] & 96.5 [91.4, 100.0] \\
    Intervention-target recall (\%) & 88.9 [79.2, 100.0] & 95.7 [85.2, 100.0] \\
    Edge precision (\%) & 91.8 [85.6, 97.7] & 94.2 [89.2, 100.0] \\
    Edge recall (\%) & 80.2 [65.6, 96.4] & 86.0 [73.5, 100.0] \\
    \midrule
    Mean difference in est.\ time saved (h) & -2.6 [-8.3, 0.8] & 8.1 [0.0, 24.2] \\
    \bottomrule
    \end{tabular}%
    }
    \vspace{-0.3cm}
\end{table}

\subsubsection{Accuracy of timing models for downstream events}

For egg-computation to produce accurate causal estimates, the timing models for downstream events should approximate the true conditional distributions $p\left(W_{k} \mid T_{\Pa{k}}, X,D\right)$ of Type II nodes.
Under leave-one-out cross-validation at the hospitalization level, we evaluated the timing models of Section~\ref{sec:step_2} on all 4{,}790 downstream node timings across the 736 hospitalizations described above, pooled over the five diagnosis groups and the eleven candidate interventions.
The comparison includes two baselines (predict zero delay and predict the median delay among event pairs of the same type), three regression models (one of which includes a sentence embeddings), and a simple ICL solution with GPT-5 that requires no feature engineering nor model fitting.
Experiment details are in Appendix~\ref{sec:type_iii_appendix}.

As shown in Table~\ref{tab:type_iii_results}, ICL attains the lowest mean absolute error (24.1 hours against 27.2 for the closest baseline).
The performance of ICL is significantly better than all five baselines for eight of the eleven interventions; the performance differences are not statistically significant for the remaining three (Table~\ref{tab:type_iii_by_intervention}, final column).
The latter three cases should be taken into context: the structure of their wait time distributions are sharply zero-inflated and have 48--71\% of their wait times under one hour.
As such, the trivial baseline of predicting zero can perform very well for these special cases.
The fact that the ICL method outperformed most methods despite its simplicity mirrors recent results finding that few-shot LLM prompting can be highly competitive with trained tabular models \citep{Hegselmann2022-hl}, particularly in settings with highly complex data and small sample sizes.
Interestingly, closer analysis revealed that ICL was better at predicting longer waiting times than regression models (Appendix~\ref{sec:type_iii_appendix}), as longer wait times tend to be noisier and have less data support.

\begin{table}[htb]
\caption{Wait time prediction accuracy for Type II nodes (4{,}790 downstream node timings from 736 hospitalizations). Mean Absolute Error (MAE) are shown in hours, along with Relative MAE compared to that of the zero-gap baseline.
Values below one improve on predicting zero delay for every node.
The final column gives each baseline's excess MAE over ICL, with 95\% intervals from a cluster bootstrap over hospitalizations.
Per-intervention and per-delay-stratum breakdowns are given in Tables~\ref{tab:type_iii_by_intervention} and~\ref{tab:type_iii_by_stratum}.}
    \label{tab:type_iii_results}
    \centering
    \begin{tabular}{lccc}
    \toprule
    Method & MAE & Rel.\ MAE & $\Delta$MAE vs.\ ICL [95\% CI] \\
    \midrule
    Zero gap          & 28.30 & 1.00 & +4.18 [3.57, 4.77] \\
    Pair-type median  & 27.23 & 0.96 & +3.11 [2.57, 3.67] \\
    Ridge             & 34.30 & 1.21 & +10.18 [9.37, 10.99] \\
    Random forest     & 36.58 & 1.29 & +12.46 [11.18, 13.90] \\
    Embedding + ridge & 33.79 & 1.19 & +9.67 [8.57, 10.70] \\
    ICL (GPT-5)       & \textbf{24.12} & \textbf{0.85} & --- \\
    \bottomrule
    \end{tabular}
\end{table}

\subsubsection{Counterfactual Gantt charts and estimated rankings from egg-computation}

\begin{figure}
    \centering
    \includegraphics[width=0.9\linewidth]{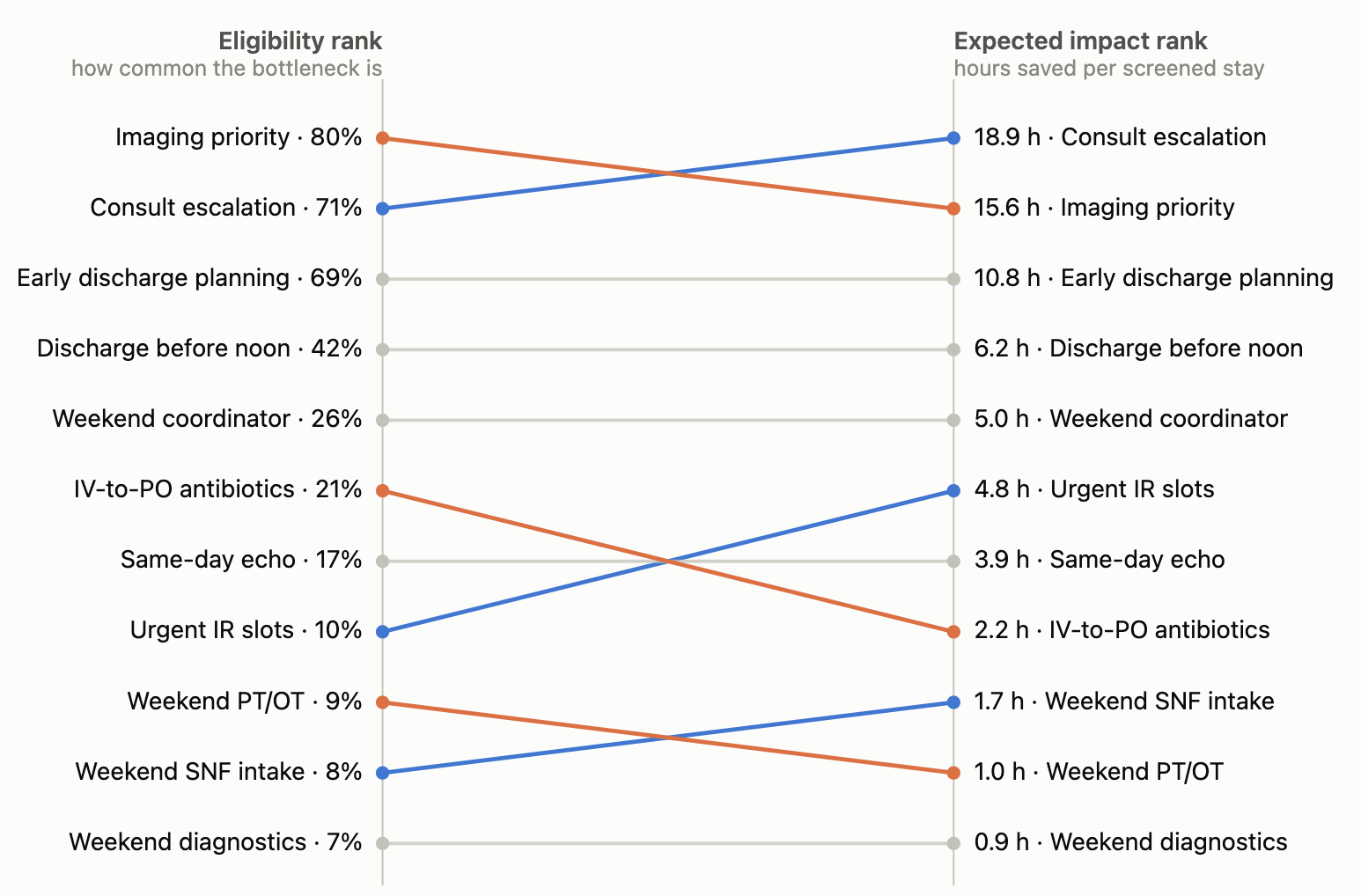}
    \vspace{-0.2cm}
    \caption{
    Prevalence-based versus causal prioritization of the eleven candidate interventions.
Eligibility rate is the share of hospitalizations that meet an intervention's inclusion criteria.
Expected time saved is the causal effect estimate from egg-computation.
Interventions are ordered by expected impact.
    }
    \label{fig:prevalence_vs_savings}
\end{figure}

Having evaluated the accuracy of different components of our LLM-powered egg-computation pipeline, we now apply the final pipeline to obtain expert-informed causal effect estimates for all eleven candidate interventions.
To illustrate results from egg-computation, Figure~\ref{fig:example_patient} shows an analysis of a randomly sampled hospitalization.
Here, an ischemic stroke patient is waiting on a transthoracic echocardiogram (TTE), a common requirement prior to discharge.
The TTE was ordered eight hours after admission but the imaging procedure was not completed until hour 72 of the patient's stay.
The clinician's disposition review followed within an hour of the TTE report.
Under the disposition-critical imaging priority intervention, egg-computation shifts the TTE completion and report roughly two days earlier, copies wait times for Type I nodes, and re-predicts timings for events downstream of the intervention targets.
The imputed LOS is consequently 42 hours shorter.

Using egg-computation, we obtain estimated rankings for the candidate interventions by their expected impact, i.e., the eligibility rate multiplied by the average time savings (Figure~\ref{fig:prevalence_vs_savings}, Table~\ref{tab:prevalence_vs_savings}).
Appendix~\ref{sec:savings_appendix} provides sensitivity analyses of the intervention rankings.
To understand the face validity of these ranking estimates, we asked four clinicians to review these results through a structured interface (Figure~\ref{fig:example_task_2}, protocol in Appendix~\ref{sec:clinician_review}).

\begin{figure}
    \centering
    \includegraphics[width=\linewidth]{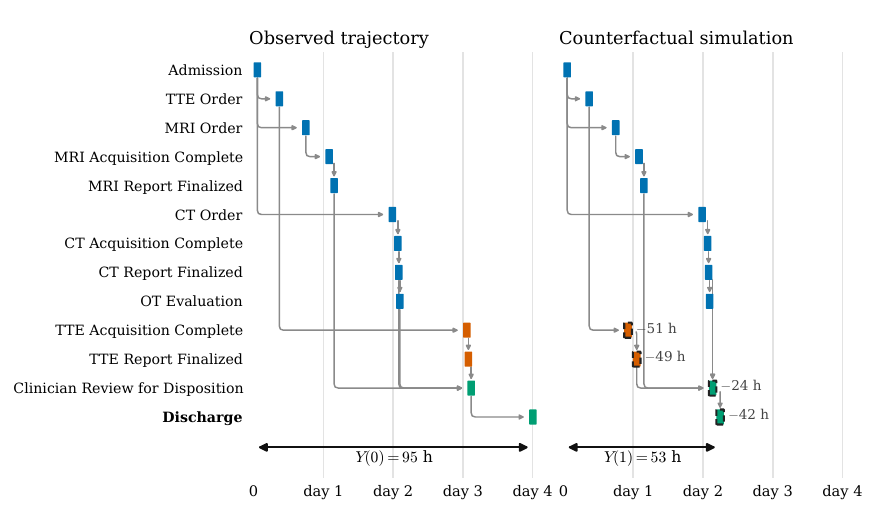}
    \definecolor{typeI}{HTML}{D55E00}
    \definecolor{typeII}{HTML}{0072B2}
    \definecolor{typeIII}{HTML}{009E73}
    \newcommand{\typeswatch}[1]{\protect\raisebox{-0.2ex}{\protect\tikz{\protect\node[rectangle, rounded corners=1pt, fill=#1, minimum height=2.2mm, minimum width=5mm, inner sep=0pt]{};}}}
    \vspace{-1.3cm}
    \caption{
    \textbf{Observed and counterfactual trajectories for one hospitalization} (ischemic stroke and disposition-critical imaging priority). 
    Each row is one event in the patient's Gantt chart, placed at its event time in hours from admission, and gray arrows are the patient's DAG edges.
    Bar colors give each event's relation to the intervention: children of $A$, i.e., the intervention targets (\typeswatch{typeI}), non-descendants of $A$ (\typeswatch{typeII}), and descendants of $A$ that are not children of $A$ (\typeswatch{typeIII}).
    In the counterfactual panel, dashed outlines mark events whose timing changed and are labeled with the shift in hours.
    }
    \label{fig:example_patient}
\end{figure}

Ranking by expected impact, ``Consult response escalation'' ranks first at 18.9 hours per screened hospitalization, followed by ``Disposition-critical imaging priority'' at 15.6.
While CI for their estimated impacts are wide, the resulting shortlist is stable, with the same three interventions occupying the top three positions in 89\% of the bootstrap resamples (Appendix~\ref{sec:rank_stability}).
Critically, these causal rankings differ from the simpler option of ranking interventions solely based on the percent of patients who are eligible.
``Disposition-critical imaging priority'' is ranked as having the most eligible patients, while ``Consult response escalation'' is estimated to have a higher causal impact.
When clinicians were asked to order the top three interventions based on their experiences at the hospital (same across eligibility and egg-computation rankings), we found the egg-computation estimates to align more with clinical intuition: three of the four clinicians ranked ``Consult response escalation'' above ``Disposition-critical imaging priority'' (the pair on which the two rankings disagree) and two exactly reproduced the egg-computation ranking, while none reproduced the eligibility ranking.
Using the estimated causal effect also swaps the positions of ``Urgent IR procedure block'' (a protected block for urgent interventional-radiology procedures) and ``Antimicrobial stewardship'' (which encourages earlier conversion from intravenous to oral antibiotics), moving the former up two positions relative to the eligibility ranking.
Its estimated impact is more than double that of ``Antimicrobial stewardship'', even though its eligibility rate is less than half.
When asked about this, clinicians agreed egg-computation's ordering was more plausible, as a prolonged IV antibiotic course usually reflects a deliberate clinical decision and oral antibiotics would not be given substantially earlier through the implementation of an automatic IV-to-oral notification system.

\begin{figure}
  \centering
  \includegraphics[width=\textwidth]{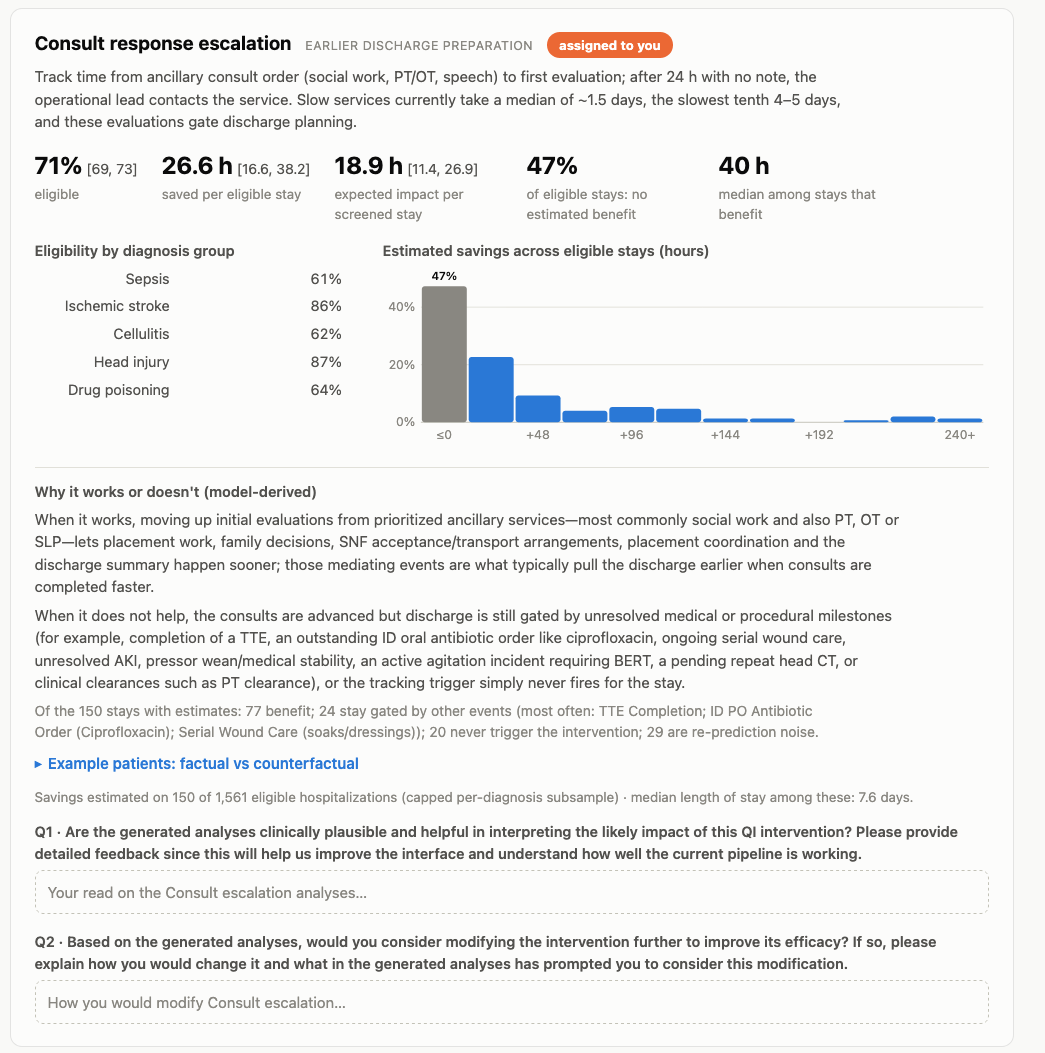}
  \caption{\textbf{LLM summary of egg-computation output of a candidate intervention provided for clinician review.}
  For each intervention, reviewers were shown the intervention description, the eligibility rate and estimated savings, the distribution of estimated savings across eligible hospitalizations, and an LLM-generated summary, based on all the hospitalizations, of when the intervention did and did not save time. An expandable section provided patient-level examples comparing each observed timeline to its counterfactual.
  Reviewers were then asked whether the analyses were clinically plausible and whether they would modify the intervention's design.
  }
  \label{fig:example_task_2}
\end{figure}

Egg-computation results also revealed additional insights into intervention design.
For instance, while the third-ranked intervention of early discharge planning had a comparably high eligibility rate (69\%, versus 80\% for the second-ranked intervention), its causal estimate was much lower (around two-thirds of the second highest causal effect).
This estimate was lower than the literature would suggest, as other hospitals have reported that early discharge planning substantially accelerated their workflows \citep{Wassef2018-kg}.
Investigating the egg-computation results further, clinicians agreed with the LLM's analysis that early discharge planning was often blocked by patient needs (e.g., medical needs) that a social worker or case manager could not resolve; in fact, clinicians felt that the LLM was often still too lenient in assuming the social worker or case manager could resolve patient needs early.
Based on the results, clinicians hypothesized that realizing the purported effectiveness of this intervention likely required much deeper organizational changes than those originally described in the intervention's specification, perhaps involving additional hospital staff (e.g., nursing) \citep{Bechir2025-tg, Bryant-Lukosius2015-qq}.

Overall, clinicians agreed that the egg-computation analysis was helpful, and they singled out the patient-level counterfactual analyses like those in Figure~\ref{fig:example_patient} as particularly helpful.
Being able to easily review a hospitalization, what could have happened under the intervention, and how many hours would have been saved made it easier to judge when an intervention would or would not be applicable.
By providing the counterfactual Gantt chart visualizations, clinicians could readily audit plausibility of the overall procedure, gain trust its causal reasoning, and provide feedback on how the LLM-assisted system could be further improved.

\vspace{-0.2cm}
\section{Discussion} \label{sec:discuss}
This work introduces a causal framework for estimating the \textit{average time saved} by a candidate intervention that has not yet been implemented; this causal question commonly arises in hospital QI but in other contexts as well.
The key insight underlying the proposal is that expert knowledge can be combined with data-driven modeling to obtain accurate causal estimates \textit{if we carefully elicit information from the right sources}: experts should supply information in areas where they are accurate but data support is limited, while data should supply information in areas where modeling is accurate but experts are not.
Using the language of Gantt charts as the bridge between expert reasoning and formal causal reasoning, we formalize a model for Gantt charts with both probabilistic and causal semantics and then develop expert-guided g-computation, or egg.
Finally, we develop a LLM-assisted implementation to scale expert reasoning required by the framework. 
Simulations demonstrate improved effect estimation relative to conventional causal inference methods, while the empirical study shows that the LLM-assisted pipeline is a practical tool whose outputs are largely concordant with those of human experts.

The current formulation suggests several natural extensions.
In particular, operational interventions often rely on shared resources, so accelerating one patient’s care may affect another patient’s trajectory.
Our current estimand should therefore either be interpreted as the effect for patients in the same population strata defined by $X$ keeping the same or effect of the intervention assuming unlimited resources for the proposed intervention and thus an upper-bound.
One solution is to apply egg-computation jointly to batches of patients instead, with each patient's access to an intervention or resource determined by concurrent demand from other patients.
Formal inferential procedures are also needed to construct confidence intervals that account for sampling variability and uncertainty in expert-supplied inputs.
More extensive calibration with expert judgment will also further strengthen its performance. 

To our knowledge, this is the first framework to bring a causal perspective to Gantt-based process improvement, providing an estimation strategy that combines expert knowledge with observed data and an LLM-assisted approach for scaling its implementation.
More broadly, egg-computation can be adapted to other time-ordered processes with task dependencies represented by Gantt charts, such as production scheduling in manufacturing, software development planning, and human resource management. This broad applicability positions egg-computation to become a useful framework for evaluating and guiding operational improvements across complex systems.

\makeatletter
\@oupdraftfalse
\makeatother
\normalsize
\bibliographystyle{plainnat}
\bibliography{paperpile, addtl_refs}

\appendix

\section{Proofs for egg-computation}
\label{sec:proofs}

The following lemmas, established for Type I and II nodes respectively, are used to prove \cref{thm:correctness}. 
In this Appendix, we use the shorthand $[k]:=\{0,\dots,k\}$ with $[-1] := \emptyset$.

\begin{lemma} \label{lem:nd}
Under \cref{assump:consistency,assump:exo,assump:exclusion,assump:local-markov},
for $k \in \Nd{A}$, we have $T_k(1) = T_k$.
\end{lemma}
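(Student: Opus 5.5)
We argue by strong induction along the topological order $0,1,\dots,K$, restricted to nodes in $\Nd{A}$. The inductive claim is that $T_j(1)=T_j$ holds for every $j\in\Nd{A}\cap[k-1]$, and we use it to establish $T_k(1)=T_k$ for $k\in\Nd{A}$. The first structural fact is that $\Nd{A}$ is closed under taking parents. If $k\in\Nd{A}$ and $j\in\Pa{k}$ were a descendant of $A$, then $A\to\dots\to j\to k$ would make $k$ a descendant of $A$, a contradiction. So $\Pa{k}\subseteq\Nd{A}$, and since the labels are topologically ordered, $\Pa{k}\subseteq\Nd{A}\cap[k-1]$. The induction hypothesis therefore applies to all parents of $k$.

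For the inductive step, take $k\in\Nd{A}$.
\begin{itemize}
\item Part 1 of \cref{assump:exclusion} gives $W_k(1)=W_k(0)$, and \cref{assump:consistency} gives $W_k(0)=W_k$, so $W_k(1)=W_k$.
\item If $\Pa{k}=\emptyset$, then \eqref{eq:T-W-a} gives $T_k(1)=W_k(1)=W_k=T_k$, using \eqref{eq:T-W}. This covers the base case $k=0$, where all these quantities equal $0$ by convention.
\item If $\Pa{k}\neq\emptyset$, then \eqref{eq:T-W-a} gives $T_k(1)=\max_{j\in\Pa{k}}T_j(1)+W_k(1)$. By the induction hypothesis and the previous step, this equals $\max_{j\in\Pa{k}}T_j+W_k=T_k$ by \eqref{eq:T-W}.
\end{itemize}
Infinite values need no separate treatment. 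With the convention $+\infty + w=+\infty$, the max-plus recursion stays consistent on $[0,+\infty]$, so the equalities carry through unchanged.

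The argument is purely deterministic and pathwise. It needs only \cref{assump:consistency} and part 1 of \cref{assump:exclusion}; \cref{assump:exo,assump:local-markov} are not used. The one step requiring care is the parent-closure of $\Nd{A}$. This relies on reading $\Nd{A}$ in the augmented DAG that includes $A$, with $A$ itself excluded as a parent of events. Because $A$ is not an event node, it carries no potential times and never appears in \eqref{eq:T-W-a}. With that settled, the induction closes immediately, and $T_k(1)=T_k$ holds almost surely for every $k\in\Nd{A}$.
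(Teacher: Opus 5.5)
Your proof is correct and follows the same route as the paper's. The paper likewise rests on the parent-closure (ancestral property) of $\Nd{A}$, part 1 of \cref{assump:exclusion}, and the recursion \eqref{eq:T-W-a}; your explicit induction along the topological order and your explicit use of \cref{assump:consistency} just spell out steps the paper leaves implicit.
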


\begin{proof}[Proof of \cref{lem:nd}]
Observe that $\Nd{A}$ is ancestral in the sense that $j \in \Nd{A}$ and  $l \in \Pa{j}$ imply $l \in \Nd{A}$.
In light of this, the result follows from the first case of \cref{assump:exclusion} and \cref{eq:T-W-a}.
\end{proof}

\begin{lemma} \label{lem:de}
Under \cref{assump:consistency,assump:exo,assump:exclusion,assump:local-markov}, for $k \in \De{A} \setminus \Ch{A}$, we have the equality in conditional distributions:
\begin{multline*}
p\left(W_k(1) = w_k \mid T_{[k-1]}(1) = t_{[k-1]}, X, \sD \right) =p\left(W_k = w_k \mid T_{\Pa{k}} = t_{\Pa{k}}, X, \sD \right), \\ w_k \in [0,+\infty], \quad t \in [0,+\infty]^{K+1}.
\end{multline*}
\end{lemma}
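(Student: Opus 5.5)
We prove the identity as a chain of three equalities. Fix $k \in \De{A}\setminus\Ch{A}$ and values $w_k$, $t$. Each link uses one assumption.

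First, since labels are topologically ordered, every $j<k$ is either a parent of $k$ or a non-descendant of $k$. Hence $[k-1] \subseteq \Pa{k} \cup \Nd{k}$. Here $\Nd{k}$ is taken among events, with $A$ excluded, as in \cref{assump:local-markov}; acyclicity gives $\De{k}\cap[k-1]=\emptyset$. Applying \cref{assump:local-markov} at $a=1$, namely $W_k(1) \indep T_{\Nd{k}}(1) \mid T_{\Pa{k}}(1), X, \sD$, we can drop the conditioning on $T_{[k-1]\setminus \Pa{k}}(1)$:
\begin{equation*}
p\left(W_k(1)=w_k \mid T_{[k-1]}(1)=t_{[k-1]}, X, \sD\right) = p\left(W_k(1)=w_k \mid T_{\Pa{k}}(1)=t_{\Pa{k}}, X, \sD\right).
\end{equation*}
Strictly, one applies weak union/decomposition to the subset $[k-1]\setminus\Pa{k} \subseteq \Nd{k}$.

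Second, the second clause of \cref{assump:exclusion} states directly that $W_k(1)\mid T_{\Pa{k}}(1),X,\sD$ and $W_k(0)\mid T_{\Pa{k}}(0),X,\sD$ have the same conditional distribution. This replaces the right-hand side by $p(W_k(0)=w_k \mid T_{\Pa{k}}(0)=t_{\Pa{k}}, X,\sD)$.

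Third, \cref{assump:consistency} identifies $W_k(0)=W_k$ and $T_{\Pa{k}}(0)=T_{\Pa{k}}$. This yields the observed conditional $p(W_k=w_k\mid T_{\Pa{k}}=t_{\Pa{k}},X,\sD)$. \Cref{assump:exo} enters only here: the observed data arise under $A=0$, so conditioning on $A=0$ is implicit, and exogeneity lets us remove that conditioning when we pass between the potential-outcome law and the observed law.

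The main obstacle is measure-theoretic. The ``equality'' of conditional distributions holds only almost everywhere with respect to the law of the conditioning variables, yet the statement is written for all $t\in[0,+\infty]^{K+1}$. We will read the conditional densities as regular versions. We will state the equality for $t$ in the support of $T_{[k-1]}(1)$, or assume a positivity condition so that the parent values reachable under $A=1$ lie in the support of the observed $T_{\Pa{k}}$.

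We will also check that values $+\infty$ cause no problem: if some parent time is infinite, then $T_k=+\infty$ under both regimes, and the identity holds trivially. In short, the substantive content reduces to invoking local Markov, exclusion and consistency in that order; the remaining care is in specifying the versions and supports of the conditional laws.
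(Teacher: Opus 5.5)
Your proposal is correct and is essentially the paper's proof: the paper's chain is \cref{assump:local-markov} (to drop $T_{[k-1]\setminus\Pa{k}}(1)$), then the second clause of \cref{assump:exclusion}, then \cref{assump:exo} to insert the conditioning on $A=0$, then \cref{assump:consistency}, so the only difference is that you fold the exogeneity step into the consistency step. Your support and positivity caveats are extra care the paper skips, since it treats the conditionals as kernels defined for all $t$; your aside on infinite parent times is a non sequitur, because $T_k=+\infty$ says nothing about the law of $W_k$ itself, but it is also unnecessary, since the same chain of equalities applies verbatim at such $t$.
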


\begin{proof}[Proof of \cref{lem:de}]
For any $w_k$ and $t$, we have
\begin{align*}
& \quad p\left(W_k(1) = w_k \mid T_{[k-1]}(1) = t_{[k-1]} \right) \\
\text{(by \cref{assump:local-markov})} &= p\left(W_k(1) = w_k \mid T_{\Pa{k}}(1) = t_{\Pa{k}} \right) \\
\text{(by 2nd case of \cref{assump:exclusion})} &= p\left(W_k(0) = w_k \mid T_{\Pa{k}}(0) = t_{\Pa{k}} \right) \\
\text{(by \cref{assump:exo})} &= p\left(W_k(0) = w_k \mid T_{\Pa{k}}(0) = t_{\Pa{k}}, A=0 \right) \\
\text{(by \cref{assump:consistency})} &= p\left(W_k = w_k \mid T_{\Pa{k}} = t_{\Pa{k}} \right).
\end{align*}
\end{proof}

\begin{remark}[Pooling assumption for the Type II timing models] \label{rem:pooling}
\cref{lem:de} reduces the counterfactual waiting-time distribution of a Type II node to the factual factor $p\left(W_k \mid T_{\Pa{k}} \right)$, but estimating this factor from data raises two issues that the pooled model of Section~\ref{sec:step_2} is designed to address.
First, a single hospitalization contributes only one draw from each factor, so any estimator must pool information across patients.
We therefore assume that hospitalizations whose child event and parent configuration match are drawn from the same waiting-time distribution, which makes the factors identifiable from a population where each configuration occurs.
Exact configuration matches are essentially non-existent in practice, since event descriptions are free text, sets of parent nodes differ from patient to patient, and parent timings are continuous.
As a result, the pooled model uses information from hospitalizations that are similar but not identical, assuming that events with similar descriptions, parent timings, and patient context follow similar timing distributions.
Whereas tabular and embedding models fix a notion of similarity in advance through their features, an LLM can weigh similarity through the clinical context of the descriptions themselves, which motivates the in-context learning option of Section~\ref{sec:step_2}.
Theorem~\ref{thm:correctness} requires the Type II timing models to equal the true conditional distributions, and in our implementation that condition holds to the extent this similarity assumption does.
Second, pooling changes what the model must condition on.
The per-patient statements of Section~\ref{sec:causal-wait} hold conditionally on the baseline covariates $X$, so \cref{lem:de} suppresses $X$ from the notation: the DAG is constructed from the patient's own chart, Type I node timings are copied from the same trajectory, and Type III node distributions are specified for the intervention acting on that patient, so none of these steps combines information across patients and each conditions on $X$ implicitly.
A model pooled across patients loses this implicit conditioning, and because the waiting time between parent and child events may depend on patient characteristics, it must condition on $X$ explicitly, i.e., estimate $p\left(W_k \mid T_{\Pa{k}}, X\right)$.
\end{remark}

\begin{lemma}[g-computation] \label{thm:g}
For $k=0,1,\dots,K$, let $T_k'(1)$ be iteratively computed as follows:
\begin{enumerate}
\item if $k \in \Nd{A}$, let $T_k'(1) = T_k$;
\item if $k \in \Ch{A}$, draw $W_k'(1) \sim  p\left(W_k(1) \mid T_{\Pa{k}}(1) =T_{\Pa{k}}'(1), X, \sD \right)$ and compute $T_k'(1)$ via \eqref{eq:T-W-a};
\item if $k \in \De{A} \setminus \Ch{A}$, draw $W_k'(1) \sim p\left(W_k \mid T_{\Pa{k}} = T_{\Pa{k}}'(1), X, \sD \right)$ and compute $T_k'(1)$ via \eqref{eq:T-W-a}.
\end{enumerate}
Then, we have $T_K'(1) =_{d} Y(1)$.
\end{lemma}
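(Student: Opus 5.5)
We will prove the stronger claim that the whole vector $T'_{[K]}(1) := (T_0'(1),\dots,T_K'(1))$ has the same conditional law given $(X,\sD)$ as $T_{[K]}(1)$. The claim $T_K'(1) =_d Y(1)$ then follows by marginalizing to the last coordinate and averaging over $(X,\sD)$. The argument is an induction on $k$ in topological order. At step $k$ we show that $T'_{[k]}(1)$ and $T_{[k]}(1)$ are equal in conditional distribution given $(X,\sD)$. The base case $[-1]=\emptyset$ is trivial, and $T_0'(1)=T_0(1)=0$ holds in any case.

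For the inductive step we compare one-step transition kernels. The wall times $T_{[k-1]}(1)$ determine $T_{\Pa{k}}(1)$, since $\Pa{k}\subseteq[k-1]$. By \eqref{eq:T-W-a}, $T_k(1)$ is a deterministic function of $T_{\Pa{k}}(1)$ and $W_k(1)$. It therefore suffices to show that the conditional law of $W_k'(1)$ given $T'_{[k-1]}(1)=t_{[k-1]}$ equals the law of $W_k(1)$ given $T_{[k-1]}(1)=t_{[k-1]}$, for $P$-almost every $t$. Combined with the induction hypothesis on the marginal of the first $k$ coordinates, this gives equality of the joint law of the first $k+1$ coordinates. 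We treat the three node types separately.

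\emph{Case $k\in\Nd{A}$.} By \cref{lem:nd}, $T_k(1)=T_k$. Type I nodes form an ancestral set, so by the induction hypothesis restricted to these coordinates, the Type I block of $T'(1)$ equals the Type I block of $T$, which equals the Type I block of $T(1)$. Formally we carry the joint statement: $T_k'(1)=T_k=T_k(1)$ holds pathwise for every earlier Type I node. For the later mixed coordinates, we check that the conditional law of each non-Type-I coordinate given all earlier ones matches.

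\emph{Case $k\in\Ch{A}$.} The sampler draws from $p(W_k(1)\mid T_{\Pa{k}}(1)=T'_{\Pa{k}}(1),X,\sD)$. \cref{assump:local-markov} with $a=1$ gives $W_k(1)\indep T_{\Nd{k}}(1)\mid T_{\Pa{k}}(1)$. Since $[k-1]\subseteq\Nd{k}$ by the topological order, conditioning on $T_{[k-1]}(1)$ reduces to conditioning on $T_{\Pa{k}}(1)$. This matches the sampler's kernel.

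\emph{Case $k\in\De{A}\setminus\Ch{A}$.} \cref{lem:de} states exactly that the true kernel given $T_{[k-1]}(1)$ equals the factual kernel $p(W_k\mid T_{\Pa{k}}=\cdot)$, which is the sampler's kernel.

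The main obstacle is the bookkeeping in the Type I case. There $T_k'(1)$ is not freshly sampled but copied from the observed $T_k$, so it is correlated with the other observed coordinates and is not drawn from a kernel given $T'_{[k-1]}(1)$. We handle this by showing that its conditional law given $T'_{[k-1]}(1)$ matches that of $T_k(1)$ given $T_{[k-1]}(1)$. To do so we need $W_k\indep T_{[k-1]}\mid T_{\Pa{k}}$ in the factual world, and then we transfer this to the counterfactual world using \cref{assump:exclusion} part 1 ($W_k(1)=W_k(0)$) and \cref{assump:consistency}. The factual independence comes from \cref{assump:local-markov} with $a=0$, together with \cref{assump:exo} and consistency. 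The delicate point is that the non-Type-I coordinates of $T'_{[k-1]}(1)$ are generated from Type I values plus fresh randomness. Conditioning on them adds no information about $W_k$ beyond $T_{\Pa{k}}$, because that fresh randomness is drawn independently of everything else given its inputs. We also need to handle the value $+\infty$ and measure-zero sets where kernels are defined only almost surely. These are routine once everything is phrased via regular conditional distributions on $[0,+\infty]^{K+1}$.
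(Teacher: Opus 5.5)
Your skeleton is the paper's: a chain rule along a topological order, with each factor matched using \cref{lem:nd}, \cref{assump:local-markov} at $a=1$ for $k\in\Ch{A}$, and \cref{lem:de} for $k\in\De{A}\setminus\Ch{A}$. Your $\Ch{A}$ and $\De{A}\setminus\Ch{A}$ cases are fine. The gap is in the Type~I step that you flag as the main obstacle.

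On the counterfactual side, you need the law of $W_k(1)$ given $T_{[k-1]}(1)$ to depend only on $T_{\Pa{k}}(1)$. You propose to get this by transferring the factual statement $W_k\indep T_{[k-1]}\mid T_{\Pa{k}}$ through the first case of \cref{assump:exclusion} and \cref{assump:consistency}. That transfer fails. Those assumptions give $W_k(1)=W_k$, and $T_j(1)=T_j$ only for $j\in\Nd{A}$. With interleaved labels, however, $T_{[k-1]}(1)$ also contains $T_j(1)$ for $j\in[k-1]\cap\De{A}$. These are built from the counterfactual waiting times $W_l(1)$, $l\in\Ch{A}$, whose joint law with $W_k$ is not a factual quantity, so no factual independence constrains it.

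What you need is \cref{assump:local-markov} with $a=1$, applied directly using $[k-1]\subseteq\Nd{k}$. This reduces the counterfactual kernel to $p(W_k(1)\mid T_{\Pa{k}}(1),X,\sD)$, which equals $p(W_k\mid T_{\Pa{k}},X,\sD)$ by the exclusion restriction and consistency. On the algorithm side, your fresh-randomness argument plus the $a=0$ Markov property then correctly gives the same kernel. Your ``Case $k\in\Nd{A}$'' paragraph does not address this conditional statement either. It only records the pathwise identity and defers to the non-Type~I coordinates, which is the wrong direction of conditioning.

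The paper avoids this bookkeeping entirely. Because $\Nd{A}$ is ancestral, one may order all of $\Nd{A}$ before $\De{A}$. This is harmless, since for $k\in\De{A}$ both $\Nd{A}$ and $[k-1]$ lie in $\Nd{k}$, so every local-Markov reduction is unchanged. One then factors $p(T_{\sD}(1)\mid X,\sD)$ as $p(T_{\Nd{A}}(1)\mid X,\sD)$ times the $\De{A}$ conditionals. The first block is matched pathwise by $T'_{\Nd{A}}(1)=T_{\Nd{A}}=T_{\Nd{A}}(1)$ (\cref{lem:nd}), and each remaining factor by \cref{assump:local-markov} or \cref{lem:de}. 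With this ordering, no copied Type~I coordinate is ever conditioned on a sampled coordinate, so the obstacle you identified never arises.
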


\begin{proof}[Proof of \cref{thm:g}]
By marginalization, it suffices to show 
\[ (T_0'(1),T_1'(1),\dots,T_K'(1)) \mid X, \sD =_{d} (T_0(1),T_1(1),\dots,T_K(1)) \mid X, \sD. \]
This identity follows from applying \cref{lem:nd,lem:de} and \cref{assump:local-markov} to each corresponding term in the factorization
\begin{multline} \label{eq:factorize-g}
p(T_0(1),T_1(1),\dots,T_K(1) \mid X, \sD) = p\left(T_{\Nd{A}}(1) \mid X, \sD \right) \\
\times \prod_{k \in \Ch{A}} p\left(T_k(1) \mid T_{[k-1]}(1) \mid X, \sD \right) \prod_{k \in \De{A} \setminus \Ch{A}} p\left(T_k(1) \mid T_{[k-1]}(1) \mid X, \sD \right).
\end{multline}
\end{proof}

\noindent We are ready to prove our main result.
\begin{proof}[Proof of \cref{thm:correctness}]
Given the additional \cref{ass:dag_recovery,assump:timing_model,assump:expert}  about the expert's specification, the first statement follows from \cref{thm:g}. 
Then, by linearity of expectation, we have
\[ \E \hat{\tau} = \E Y - \E T_K'(1) = \E Y(0) - \E Y(1) = \tau, \]
where we used the fact that $Y = Y(0)$ by \cref{assump:consistency}.
\end{proof}

\section{Related work}
\label{sec:related}

\textbf{Causal Inference Methods and their Assumptions.}
Causal inference methods for longitudinal data allow treatment effects to be estimated over long-ranging time sequences, such as using g-computation methods for discrete time grids \citep{Robins1986-kr} and Local Independence Graphs for continuous time \citep{Didelez2008-fw, mogensen2020markov, roysland2025graphical, Wald2024-yp}.
Nevertheless, existing methods largely rely on the core assumptions of conditional exchangeability, consistency, and positivity.
Furthermore, existing methods typically assume the intervention's effect is encapsulated by a simple formula/function.
However, complex interventions, such as those described via text, may not have such simple descriptors.
Finally, existing methods typically assume a common causal graph, whereas the causal graph may not be constant over time \citep{Keshavarz2020-de}.

\textbf{Role of Experts in Causal Inference.}
In traditional causal inference methods, experts play the role of specifying the causal structure, as causal structure cannot be determined from data alone.
To this end, various guidelines have emerged regarding what variables to include in the causal graph \citep{VanderWeele2019-ws, Guo2026-wl}.
Nevertheless, traditional applications rely on human judgment, which makes these methods difficult to scale.
Beyond specifying causal structure, experts are also relied upon to supply quantitative judgments, such as eliciting effect magnitudes as probability distributions \citep{OHagan2006-new} and bounding the plausible strength of unmeasured confounding in sensitivity analyses \citep{CinelliHazlett2020-new}.

\textbf{LLMs for Causal Discovery and Mimicking Expert Reasoning.}
Given the technical and practical difficulties of causal DAG construction, there is now increasing interest and evidence that LLMs can assist in conducting causal reasoning \citep{Liu2025-uc}.
Within causal inference/discovery, prior works have demonstrated that LLMs can accurately infer pairwise causal relationships simply through variable names \citep{Kiciman2024-gz, Liu2024-kq}.
LLMs have also achieved good results when inferring larger multi-node causal graphs from scientific/medical abstracts and news articles (e.g., through iterated pairwise queries) \citep{Antonucci2024-xb, Gendron2025-cj}.
Most importantly, with proper calibration, LLMs can accurately mimic human reasoning \citep{Argyle2023-yl, Aher2022-sd}, which motivates our approach in using LLMs to conduct \textit{expert}-guided causal reasoning.
We do not claim to perform \textit{exact} causal reasoning;  because the graph itself is a reflection of \textit{expert} judgment, this mirrors standard practice in causal inference.
To our knowledge, the closest work on using LLMs to conduct end-to-end causal inference is \citep{Cotta2024-qc}, which considers a different setting where the goal is to scrape internet data to estimate the causal effect by (i) extracting the outcome, treatment, and confounders and (ii) running classical causal inference methods such as IPW.

\textbf{Informatics Approaches to Healthcare Flow Modeling.}
Parallel to the statistical literature, informatics and operations research communities have long examined hospital flow through process-improvement frameworks such as Lean methodology \citep{Catalyst2018-ho, D-Andreamatteo2015-zg}, Gantt charts \citep{Wilson2003-zq, Clark1922-ih},  discrete-event simulation (DES) \citep{Banks2005-bx}, and queueing theory \citep{Shortle2018-gc, Green2006-yl}.
Nevertheless, existing methods lack rigorous causal identification guarantees and are not designed to leverage unstructured data to improve the causal reasoning.
This work illustrates how the operational practice of modeling with Gantt charts can be combined with formal causal reasoning methods (e.g., g-computation variants) to produce causal estimates with explicit identification assumptions.

\section{The LLM pipeline for DAG construction}
\label{sec:dag_pipeline}

This section describes what each of the pipeline's five steps instructs the LLM to produce, how the graphs are kept compact, and the checks applied to the output.
Each admission's notes, orders, and results are merged into a single timeline with repeated note text deduplicated; these timelines have a median length of roughly 48{,}000 tokens.
The first step reads this full patient timeline and the operational intervention and, in free-form clinical prose, identifies the events that determine length of stay.
It enumerates the intervention targets (Type~III nodes) that the intervention directly re-times, the sink-gating events that must complete before discharge can occur, and, tracing backward from both toward admission, the intermediate events and candidate confounders that lie along those causal paths.
The confounder search is carried out in this first step, following the iterative graph-expansion view of \citet{Guo2026-wl} and the disjunctive cause criterion of \citet{VanderWeele2019-ws}, so a pre-exposure variable is retained when it is a cause of an intervention target, a discharge-blocking event, or both.
The second and third steps turn this reasoning into the Gantt chart's fixed set of events and then the timing dependencies among those events.
The second step finalizes the event set by deduplicating and filtering the first step's events, giving each event a start and an end timestamp read from the chart, and tagging which events the intervention directly affects.
The third step holds the event set fixed and records, for every event, which other events' timings determine when it starts and, for events with nonzero duration, when it ends.
The fourth and fifth steps re-express the same content in the structured graph format that the downstream code consumes, again splitting the work into nodes first and edges second.
The fourth step emits the graph's nodes with their clinical category, event type, and timestamps, while the fifth step emits the directed edges, each carrying a one-sentence causal rationale and a verbatim supporting quote from the timeline.
Consistent with the design principle stated in the main text, the earlier steps are prompted to spend their reasoning on clinical and causal judgment and the later steps to spend it on faithful transcription into the required format.
Algorithm~\ref{algo:dag_construction} summarizes the recipe the prompts encode.

No step sets an explicit target for the number of nodes or edges; the reported average of roughly 14 nodes and 21 edges follows from two instructions rather than a fixed quota.
The first is a granularity target applied during backward tracing: the first step is told to ``aim for a granularity such that all causal links span no more than'' a set number of hours (twenty-four in our runs), with any longer link ``split further by adding more detailed event breakdowns.''
This adds nodes where consecutive events are far apart in time, the gaps that are hardest to model and have the least data support.
The second is the stopping rule implied by the disjunctive cause criterion, under which backward expansion halts once every retained pre-exposure variable is a cause of an intervention target or a sink-gating event, bounding the confounder set rather than requiring an enumeration of all baseline covariates.
The second step then removes near-duplicate events and drops any event that does not help determine the timing of another event, trimming the graph before it is formalized in the structured steps.

Once the structured node and edge steps have run, each graph passes through a set of deterministic checks before it is accepted.
The extracted timestamps are compared against the source timeline, and any timestamp with no charted event within an hour is flagged for later review.
The node and edge text is then parsed into a graph, and a parse failure or a structural problem triggers one automatic repair attempt in which the offending output is returned to the model together with the specific errors found and a request to fix only those.
Structural validation checks for cycles using a topological sort over the split start and end endpoints of each event and confirms that every event has a directed path to the discharge node.
Small temporal inconsistencies in which a parent's timestamp slightly postdates its child's are corrected automatically within a few-minute tolerance, while larger violations are recorded as issues.
Events that have no cause or no downstream effect are pruned, with the exception of the reserved admission, baseline, and discharge nodes and the intervention targets, which are never dropped.
Because the baseline nodes are connected to every non-baseline event automatically during post-processing, the LLM is instructed to omit those edges from its output.
Graphs that still fail structural validation after the single repair attempt are kept and marked rather than discarded; the sensitivity of the timing results to these graphs is reported in Section~\ref{sec:type_iii_validity}.

\begin{algorithm}
\caption{The DAG-construction recipe encoded in the pipeline's prompts.
The recipe is realized as a fixed sequence of five model calls rather than a literal node-by-node loop; the first call performs the full backward trace.}
\label{algo:dag_construction}
\begin{algorithmic}[1]
    \State Seed the graph with the sink event (discharge/death in this setting); the DAG describes the hospitalization as observed, and the intervention node $A$ is attached automatically after construction.
    \State Add the intervention targets, the events whose timing the intervention directly changes, and the sink-gating events (the direct parents of the sink).
    \State Trace backward from each sink-gating event, adding the intermediate events that determine its timing.
    \State Trace backward from each intervention target in the same fashion.
    \State Following the disjunctive cause criterion, retain the confounders: pre-exposure variables that cause an intervention target, a sink-gating event, or both, including a baseline patient-characteristics node.
    \State Record dependencies that link parallel care pathways, where an event is gated by an event on a different pathway; these become edges in the later steps.
    \State Finalize the DAG over the baseline, admission, intervention-target, intermediate, confounder, sink-gating, and sink nodes, with edges defined accordingly.
\end{algorithmic}
\end{algorithm}

\section{Prompt templates}

This section reproduces the two timing-query templates referenced in Section~\ref{sec:step_2}.
Curly-brace fields mark runtime substitutions, and the remaining pipeline prompts are available in the accompanying software package.

\subsection{Type III node timing query}
\label{sec:type_i_prompt}

This prompt elicits the revised timestamp of an intervention target (a Type~III node) under the proposed intervention.
At runtime, \texttt{intervention\_text} receives the intervention's label, description, and recommended change from its specification; \texttt{patient\_summary} a short summary of the hospitalization; \texttt{parent\_text} the target's parent events with their current timestamps; and \texttt{child\_text} the target event itself, without a timestamp.
GPT-5-mini returns a short rationale and a single revised timestamp, to which the capping rule of Appendix~\ref{sec:savings_appendix} is applied.

\begin{verbatim}
# TASK BACKGROUND
As a member of the hospital's Quality Improvement team, you have suggested
implementing the following operational intervention to reduce overall LOS in
the hospital:
"""
{intervention_text}
"""

Your task now is to look at a snippet of the patient's encounter that is
relevant to this intervention and determine the anticipated timestamp of the
event of interest, given recent events for this patient that are causally
relevant for determining the timestamp of this event in question.

# PATIENT INFO
Patient summary: {patient_summary}

What has happened recently for this patient that are relevant:
{parent_text}

The event that we need you to determine the timestamp for:
{child_text}

Consider:
1. Your planned intervention as described and how it would apply to this
   specific patient encounter
2. The parent node timestamps (the new timestamp cannot be earlier than any
   parent)

Output your reasoning and the new timestamp in JSON format:
{
  "reasoning": <your reasoning about how the intervention shifts this event>,
  "new_timestamp": "<YYYY-MM-DD HH:MM>"
}
\end{verbatim}

\subsection{Type II node timing query}
\label{sec:type_iii_prompt}

This prompt elicits the delay, in hours, between a downstream event (a Type~II node) and the latest of its parent events.
At runtime, \texttt{training\_csv} receives fifteen worked examples from other hospitalizations, matched on parent and child event type where possible and completed at random, each giving the patient summary, the parent and child event descriptors, and the observed delay.
The \texttt{patient\_summary}, \texttt{parent\_text}, and \texttt{child\_text} fields describe the query itself, and GPT-5 returns a step-by-step rationale and a single delay.

\begin{verbatim}
# TASK BACKGROUND
You are predicting the gap time between nodes in a hospital DAG that represents
the dependencies within an inpatient encounter, which will help us reason about
overall LOS.

# WHAT IS A DAG?
The DAG represents causal relationships between clinical events for a hospital
patient. DAG nodes are instantaneous clinical events and edges represent causal
dependencies.

# NODE TYPES
- start or end nodes of a process: indicates start/end of a clinical event
- state_achieved: Instantaneous event (start=end) indicating criteria are met
  (e.g. patient stable)
- action: Instantaneous decision/order (start=end) (e.g. PO ordered)
- start or end nodes of a waiting period: indicates start/end of a waiting
  period, potentially modifiable by intervention

# GAP TIME
Defined as the gap time between the child node and the maximum time among parent
nodes (i.e. time of latest event)

# TRAINING EXAMPLES
Here are examples from similar patient encounters:
{training_csv}

# YOUR TASK
Predict target_gap_hours for this edge:
{patient_summary}parent_text: {parent_text}
child_text: {child_text}

Think step by step about the clinical process, what the likely last parent event
is and what it represents. Use your clinical judgment on what is the most likely
length, along with looking at the provided examples of typical durations at this
hospital.
Output your reasoning and final prediction in hours.

Format your answer in JSON format:
{
  "reasoning": <your reasoning>,
  "prediction": <number in hours>
}
\end{verbatim}

\section{Simulation details} \label{sec:sim_details}

\paragraph{Estimation methods.}
We compare a set of estimation methods that differ along two axes: the causal structure they assume, either the patient's own DAG or a single shared DAG fit to the whole cohort, and how they determine the Type~II timings downstream of the intervention, either by exact event propagation or by a fitted model.
The methods built on the patient-specific DAG use the true latent structure recorded in the note, namely which reports the decision waits on and which report's timing depends on the other, an idealization that isolates the value of the patient-specific structure from the separate task of recovering the label from the note.
\noindent We describe each method in turn.
\begingroup
\renewcommand{\labelenumi}{(\roman{enumi})}
\begin{enumerate}
\item \textbf{Oracle}: Computes the counterfactual mechanically from the true data-generating structure by clipping the targeted delay at the cap.
This serves as the ground truth against which bias is measured and does not appear as a row in Table~\ref{tab:sim_results}.

\item \textbf{\texttt{DAG-extractor + Gantt-timing}}: Egg-computation using each patient's correct DAG, propagating the capped report time through the exact Gantt update that reuses the factual time differences (\texttt{Gantt-timing}, Section~\ref{sec:simulation}).
With the correct DAG this method recovers the truth, so it is the oracle configuration and an upper bound on what a perfect Type~II step can achieve.

\item \textbf{\texttt{DAG-extractor + OLS-timing}}, \textbf{\texttt{DAG-extractor + kNN-timing}}: Use the same correct DAG, but the Type~II timings are estimated rather than propagated exactly.
The OLS variant fits linear models for the downstream report and the decision, conditioned on which reports the patient's decision waits on and which report's timing depends on the other; the kNN variant estimates the counterfactual delay using a $k$-nearest-neighbor imputation from patients whose targeted delay already falls below the cap, mirroring the in-context retrieval the LLM Type~II step performs.
We select $k$ per intervention by leave-one-out cross validation on the observations already below the cap.
Per-patient error and ATE bias are insensitive to the choice of $k$ (Appendix~\ref{sec:knn_sweep}) because the kNN variant's residual error arises from estimating the mean of the below-cap observations rather than from the neighborhood size.

\item \textbf{\texttt{Single-DAG + Gantt-timing}}: The same exact Gantt update used by \texttt{DAG-extractor + Gantt-timing}, applied to a single shared DAG (every patient is assumed to wait on both reports with one common dependence direction).
Because the propagation of events has no estimation error, the two \texttt{Gantt-timing} methods share an identical, noiseless Type~II step and differ only in the assumed DAG, so any error results from the wrong DAG structure.
We report both orientations, CT$\to$MRI and MRI$\to$CT.

\item \textbf{\texttt{Single-DAG + OLS-timing}}: Fits a pooled g-computation model that assumes the same common DAG structure with a fitted linear Type~II step (ordinary least squares for the downstream report and the decision).
We report both dependence direction orientations, (CT$\to$MRI) and (MRI$\to$CT).

\end{enumerate}
\endgroup

We report each quantity averaged over 50 independent simulation replications, with its standard error across replications.

\section{Sensitivity of \texttt{DAG-extractor + kNN-timing} to the neighborhood size $k$}
\label{sec:knn_sweep}

The \texttt{kNN-timing} model imputes the counterfactual delay of each patient whose delay is capped from the $k$ nearest neighbors in baseline acuity, a baseline severity covariate in the simulation, among patients whose targeted delay already falls below the cap, so the neighborhood size $k$ is a tuning choice that could in principle drive its error.
We select $k$ per intervention by leave-one-out cross-validation among the observations with delays below the cap, predicting each donor's own delay from its $k$ nearest neighbors and minimizing held-out MAE.
This is the only model-selection signal available without the counterfactual, and it mirrors how a practitioner would set the number of in-context exemplars in the LLM Type~II step.

Figure~\ref{fig:knn_sweep} sweeps $k$ from a single neighbor to several hundred and shows that both the per-patient MAE and the ATE bias are flat to within $0.1$ hours across the entire grid, for both interventions, and far below the best-performing \texttt{Single-DAG} method throughout.
This is because every donor observation lies below the $12$-hour cap, so the kNN variant imputes a post-intervention delay below the cap regardless of the number of neighbors used.
Its residual error therefore arises from conditioning on observations below the cap, not from the choice of $k$.

\begin{figure}[t]
\centering
\includegraphics[width=\linewidth]{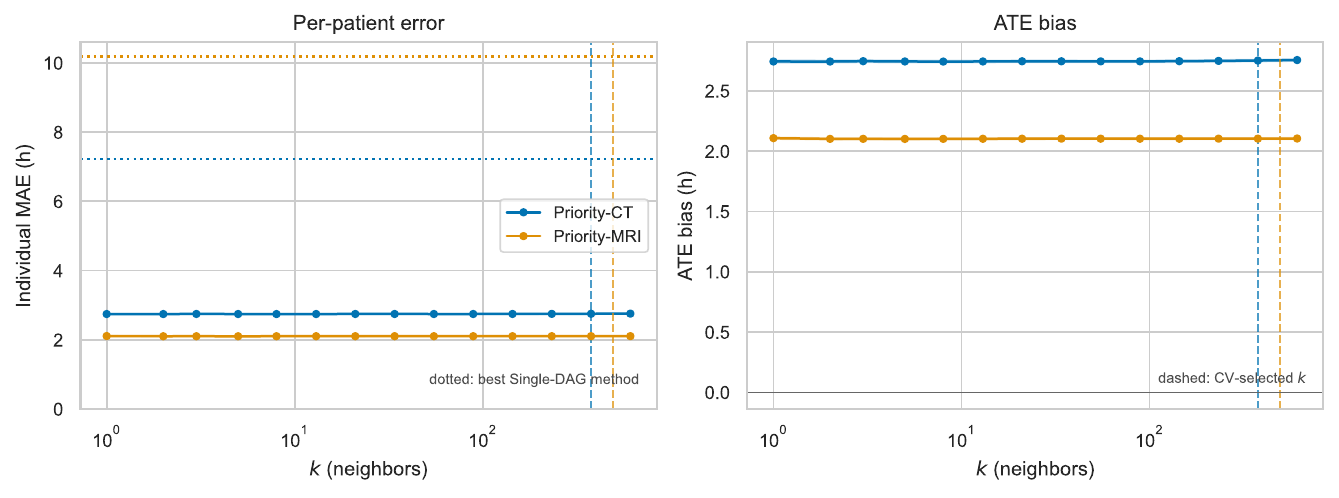}
\vspace{-1cm}
\caption{\texttt{DAG-extractor + kNN-timing} error as a function of the neighborhood size $k$, averaged over the 50 replications (bands are $\pm 1$ standard error). 
Left: per-patient MAE, with the dotted lines marking the lowest-MAE \texttt{Single-DAG} method for each intervention. 
Right: signed bias of the estimated average time saved 
The dashed vertical lines are the cross-validation-selected $k$. Both panels are flat in $k$, and the MAE stays far below the \texttt{Single-DAG} floor throughout, so the kNN variant's error reflects its estimation strategy rather than the choice of $k$.}
\label{fig:knn_sweep}
\end{figure}

\section{Definitions of the candidate interventions}
\label{sec:intervention_definitions}
The eleven candidates were drawn from a prior LLM-assisted analysis of the most frequent bottlenecks and delays at the hospital \citep{Vossler2026-ew}, and each was specified as an operational rule together with the inclusion and exclusion criteria that determine which hospitalizations are eligible.
Each candidate is operational in the sense that it changes when already-indicated work is completed, not what care the patient receives.
Each entry below reproduces the intervention's operational rule verbatim from its specification, followed by a summary of its eligibility scope; the complete specifications given to the LLM for eligibility determination and Type~III timing are provided in the accompanying software package.

\paragraph{Consult response escalation.}
\textbf{Rule:} For a prioritized ancillary consult (SLP, social work, PT, or OT) that gates discharge-relevant work, pending consult age is tracked on the flow navigator and escalated at 12 hours without a first note.
As such, order-to-first-evaluation should generally not exceed 12 hours, which encompasses acknowledgment of the order, the service reaching the patient, and the initial evaluation note.

\noindent The current median order-to-first-evaluation is about 1.5 days for the slowest of these services.
A hospitalization is eligible when such an ancillary consult is ordered and its evaluation gates functional clearance, the placement decision, or discharge planning.
Physician and medical-team consults (for example cardiology, neurology, or surgery) are not in scope, and a hospitalization is excluded when the patient could not be evaluated within the window for medical reasons, a clinical prerequisite had to occur first (such as extubation before a swallow evaluation), or the entire window falls on a weekend or holiday when the service does not staff evaluations.

\paragraph{Disposition-critical imaging priority.}
\textbf{Rule:} If a diagnostic imaging study (TTE, TEE, CT with contrast, MRI, or ultrasound) is needed for deciding patient disposition, the study is prioritized so that the study will be completed the same calendar day if the study is ordered before 3pm and will be completed by noon the following day if ordered after 3pm.
As such, order-to-completion for a flagged study should generally never exceed 24 hours, which encompasses the time of acknowledgment of the order, transportation of the patient, capturing the image, and the final report on the imaging study.

\noindent A hospitalization is eligible when one of these modalities is ordered and its result gates the disposition decision.
This is one of the two interventions used in the annotation study, and its full definition as shown to annotators is reproduced in Appendix~\ref{sec:annotation_protocol}.

\paragraph{Early discharge planning.}
\textbf{Rule:} Starting at the 1 day mark (24 hours), all patients will be evaluated for discharge planning.
Discharge planning will then proceed within the following 24 hours, and patients with complex discharge needs relating to housing, post-acute facility placement (SNF, acute rehab, LTACH), hospice, durable medical equipment, transportation, home health, IHSS, social-benefit applications (e.g., Medi-Cal), and/or substance use will meet with social work or case management within the 24 hour (i.e., 24--48 hours) time period.
Qualifying complex patients are: (1) unhoused patients or patients living in a shelter; (2) patients who came directly from a SNF or another facility/program; (3) patients who had home services; (4) patients who, based on the first 24 hours, will need to be set up with home services, a SNF, acute rehab, a long-term acute care facility (LTACH), or a hospice referral.
If the patient agrees, the social work/case manager will contact appropriate discharge resources within 24 hours after meeting with the patient.

\noindent A hospitalization is eligible when a complex-needs placement or discharge-services referral occurs and the need was identifiable within the first 24 hours of admission.
This is the second annotation-study intervention, and its full definition as shown to annotators is reproduced in Appendix~\ref{sec:annotation_protocol}.

\paragraph{Discharge before noon.}
\textbf{Rule:} At 4 PM each day, evening discharge rounds identify patients expected to discharge the following day; for those patients, the team enters the discharge order and prescriptions and completes readiness tasks that evening.
As such, on the discharge day itself, the discharge order should generally be in by 11 AM and the patient should leave before noon---the intervention moves the discharge earlier within the day, not to a different day.

\noindent The current median discharge time is near 4 PM.
A hospitalization is eligible when the patient left on a routine discharge and was documented by 4 PM the prior day as likely to discharge the next day.
It is excluded when the discharge still depended on a result or consult that could not be anticipated that evening, when new instability is documented overnight, or when the departure time was set by the receiving side rather than by discharge-task readiness.

\paragraph{Weekend discharge coordinator.}
\textbf{Rule:} Discharge coordinators staff Saturday and Sunday from 8 AM to 4 PM, rounding on patients flagged at Friday interdisciplinary rounds as likely to discharge within 48 hours and completing the coordination tasks that would otherwise wait for Monday.
As such, a patient who reaches medical readiness between Friday afternoon and Sunday, and whose remaining barriers are coordination tasks, should generally be discharged the same or next day rather than held to Monday.

\noindent A hospitalization is excluded when a clinical barrier remains, when the discharge requires a weekday-only service such as a pending consult or placement decision that coordinator coverage cannot resolve, or when the departure timing is set by the receiving facility.

\paragraph{Urgent IR procedure block.}
\textbf{Rule:} An IR triage physician reviews pending inpatient requests by 10 AM and directs urgent flow-related cases---active infection requiring drainage, an obstructed system with sepsis, vascular access for time-sensitive therapy---into 2 protected daily slots.
As such, request-to-completion for a qualifying urgent IR procedure should generally not exceed 24 hours.
The rule bounds the procedure, not the finalized report.

\noindent The current median request-to-completion is near one day, with the slowest tenth near six days.
Elective and outpatient cases are out of scope, and a hospitalization is excluded when the patient was not medically stable for the procedure, a clinical prerequisite had to occur first (such as anticoagulation reversal), or consent could not be obtained within the window.

\paragraph{Same-day echo completion.}
\textbf{Rule:} A discharge-prioritized TTE ordered before 11 AM is completed the same calendar day in a reserved slot; TEEs get add-on spots, and any TEE not completed within 24 hours of order is escalated to cardiology and anesthesia leadership---anesthesia scheduling delays are within the escalation's reach.
As such, order-to-completion for a discharge-prioritized TTE or TEE, including the report, should generally not exceed 24 hours.

\noindent A hospitalization is eligible when a TTE or TEE on the discharge critical path is ordered, for example for suspected endocarditis, new heart failure, or preoperative cardiac clearance.
It is excluded when the patient was not medically stable for the study, a clinical prerequisite had to occur first (such as NPO status for a TEE), or the 24-hour window falls entirely on a weekend or holiday when the reserved slots are unstaffed.

\paragraph{Antimicrobial stewardship.}
\textbf{Rule:} When a patient with sepsis on IV antibiotics meets oral conversion criteria---afebrile $>$24h, tolerating PO, clinically improving, and culture susceptibilities showing an oral option---an automated alert fires and a pre-authorized pharmacist protocol converts standard regimens without waiting for rounds.
As such, the time from meeting conversion criteria to the IV-to-PO conversion order should generally not exceed 12 hours.

\noindent The current median time from meeting criteria to conversion is about one day.
A hospitalization is excluded when the infection requires sustained IV therapy (for example endocarditis, bacteremia requiring prolonged IV, or a central-nervous-system infection), when the patient cannot tolerate oral intake, or when no oral-susceptible organism is available.

\paragraph{Weekend SNF intake.}
\textbf{Rule:} Partner SNFs accept weekend admissions with intake by 2 PM; case management completes referral packets by Thursday for patients anticipated to be SNF-ready over the weekend, partner facilities confirm bed availability by Friday noon, and transportation is pre-arranged.
As such, a patient with a SNF disposition identified by Thursday who reaches medical readiness between Friday and Sunday should generally transfer the same or next day rather than waiting until Monday.

\noindent A hospitalization is excluded when the SNF need arose too late to anticipate, when the patient requires acute rehabilitation or a long-term acute care hospital rather than a SNF, or when the patient needs services no partner facility provides.

\paragraph{Weekend PT/OT coverage.}
\textbf{Rule:} PT and OT therapists staff Saturday and Sunday from at least 8 AM to 2 PM, working from the Friday-rounds flag list and prioritizing patients who could discharge the same day if cleared.
As such, a patient pending only PT/OT clearance over the weekend should generally be evaluated by Sunday noon rather than waiting for Monday staffing.

\noindent A hospitalization is excluded when a clinical barrier beyond therapy clearance remains, when the patient's needs require inpatient-rehabilitation intensity rather than a clearance evaluation, or when the patient cannot participate in the evaluation at any staffed time.

\paragraph{Weekend diagnostic slots.}
\textbf{Rule:} Inpatient studies whose result gates discharge get reserved weekend capacity---at least 4 echo and 6 CT/MRI slots per weekend day, scheduled first-come from Friday noon, with anything unfinished by end of Saturday escalated to the on-call radiologist or cardiologist for Sunday morning.
As such, a discharge-critical study should generally be completed, including the report, within 24 hours of its weekend clock start---the order time for studies ordered after Friday noon, and Friday noon for studies ordered earlier and still pending then---rather than held to Monday.

\noindent A hospitalization is excluded when the patient was not medically stable for the study, when the study requires a weekday-only resource or specialized staffing the weekend lacks (such as anesthesia support for a TEE), or when a clinical prerequisite had to occur first.

\section{Computational cost of the LLM pipeline}
\label{sec:computational_cost}
Token usage was logged for every LLM call in the real-data pipeline; the figures below price those tokens at OpenAI list rates (GPT-5: \$1.25 per million input tokens and \$10 per million output tokens; GPT-5-mini: \$0.25 and \$2, respectively).
The full pipeline consumed 562 million input tokens and 90 million output tokens, for a total cost of approximately \$560 at standard rates, or roughly \$490 after prompt-caching discounts.
By stage, screening all 2{,}193 hospitalizations for eligibility against the eleven interventions cost \$29, constructing the 1{,}531 DAGs cost \$215 (about 14 cents per DAG), the Type III timing draws cost \$15, and the Type II ICL predictions (both the model evaluation and the counterfactual propagation) cost \$305.
The non-LLM timing baselines and the final averaging step incur no API cost.

\section{Annotation study protocol and metric definitions}
\label{sec:annotation_protocol}

\paragraph{Interface and review protocol.}
Annotators reviewed the LLM-generated DAGs through a web-based annotation interface that displays the events extracted for a hospitalization, the intervention description, and the intervention-relevant portion of the graph.
Each annotation task is one annotator's review of one intervention-hospitalization pair and proceeds in a fixed order.
First, the annotator confirms or rejects each event the LLM flagged as an intervention target and reviews the full list of extracted events to flag any targets the LLM missed.
Second, for each event reachable from the corrected intervention targets, the annotator approves or deletes each edge the LLM proposed into that event and reviews earlier events offered as candidate missing parents, adding any edges the LLM omitted.
Annotators corrected the LLM's graph rather than drawing their own from scratch, so all rates measure agreement under LLM-assisted review.
Figure~\ref{fig:annotation_interface} shows the interface, rendered on the synthetic test patient used during interface development.

\begin{figure}[htb]
\centering
\vspace{-0.4cm}
\includegraphics[width=\linewidth]{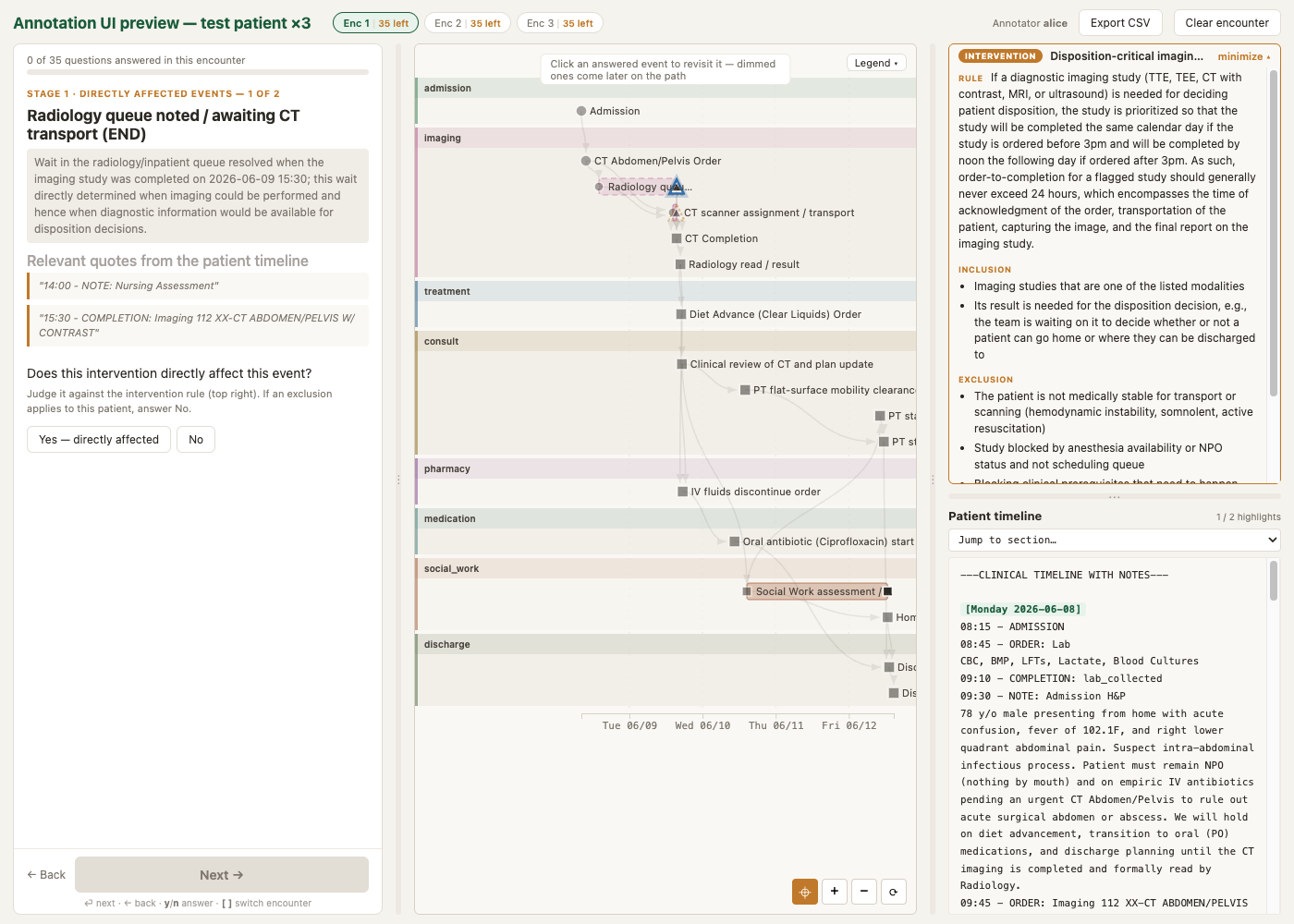}
\vspace{-0.9cm}
\caption{The annotation interface, rendered on the synthetic test patient used for interface development; no real patient data is shown.
The left panel poses one verdict question at a time together with the LLM's stated reasoning and its supporting timeline quotes.
The center panel shows the extracted DAG on a department-by-time canvas.
The right panel pins the intervention's rule and inclusion/exclusion criteria above the patient timeline.}
\label{fig:annotation_interface}
\end{figure}

The next two paragraphs reproduce the written guidance the annotators received.
Its vocabulary differs from the paper's in three places: the guide says ``encounter'' for hospitalization, ``QI intervention'' for candidate intervention, and ``directly affected event'' for intervention target (Type~III node).
References to ``the first tab'' point to the portions of the guide document reproduced under the conventions and intervention definitions below.

\paragraph{Annotator instructions.}
The remainder of this paragraph reproduces, verbatim, the written guidance given to the annotators, comprising the general annotation conventions and the step-by-step annotation guide for the interface.

The DAG indicates how event timings \textbf{causally} depend on one another.
There are two general tasks:
\begin{itemize}
  \item Would this event's timing be directly affected by the QI intervention? That is, would implementing the QI intervention directly change an event's timing?
  \item Is this event's timing directly dependent on a prior event's timing? That is, would a change in event A's timing directly cause event B's timing to change?
\end{itemize}

\noindent\textbf{Annotation conventions:}
\begin{itemize}
  \item A direct connection between events A and B means there is not a different event/node in the graph that would fully explain away the dependency between the timings of events A and B.
  \item Note that an intervention can still affect an event's timing if the event already happens very fast in the given patient, as the intervention may directly speed up this event's timing for a different patient.
  \begin{itemize}
    \item Answer ``no'' only when the mechanism doesn't apply or an exclusion applies
  \end{itemize}
  \item New events cannot be added to the DAG, you can only:
  \begin{itemize}
    \item change which events count as directly affected by the intervention
    \item add or remove connections between the displayed events
  \end{itemize}
\end{itemize}

Each encounter is shown as a graph of events extracted from the patient's chart, things like orders, results, consults, transfers, discharges, and any other events the LLM extracts from the notes.
An arrow between two events means the LLM believes the later event's timing directly depends on the earlier event's timing: a change in the earlier event's timing would directly cause the later event's timing to change.
In addition to the graph you have access to the patient timeline with events and the notes that the LLM saw when constructing the graph.

Your job is to check the model's constructed graph for the intervention shown in the first tab of this doc and in the annotation interface.
The tool goes through questions one at a time and you answer yes or no.
Your reasoning should be based on what the chart documents and the question is always ``would this intervention, as written, directly affect this event for this patient?''

\noindent\textbf{Task 1: which events would the intervention affect directly?}

\noindent The interface first asks you to review nodes in the DAG to determine if the LLM correctly marked events that are directly affected by the intervention.
\begin{itemize}
  \item ``Directly affected'' means the intervention itself acts on that event's timing. Events that would move only because an earlier event moved don't count; the model handles those downstream changes.
  \item An event that was already fast still gets a yes if the mechanism applies to it (see the conventions on the first tab).
  \item When you answer no, add a few words on why: which exclusion applies, or why the mechanism doesn't act on this event.
\end{itemize}
After the LLM-flagged events, the interface asks whether the intervention would directly affect any other events in the graph.
If there are any directly-affected events that were missed, you can add them now.

\noindent\textbf{Task 2: which event timings does this event timing depend on?}

\noindent Next you review the edges of the DAG to determine if event timing dependencies are correctly determined by the LLM.
\begin{itemize}
  \item Keep an edge to the parent event if the event truly had to wait for it. Had the parent happened earlier, this event could have happened sooner too
  \item Remove the edge if there is no causal dependence between them
  \item Add a missing edge if an earlier event in the graph blocked this one and isn't listed.
  \item New events cannot be added. You can only add or remove connections between the events shown in the DAG
  \item Note that the final (sink) node in the DAG is always ``Discharge/Death'', which indicates the end of the hospital stay where the patient was either discharged or died. Which of the two should be evident from the patient timeline and LLM extractions.
\end{itemize}

\noindent\textbf{When something looks off}
\begin{itemize}
  \item If an event looks incorrect (mislabeled, incorrect time, not supported by the notes), describe the problem in the ``no'' note
  \item If the intervention doesn't apply to this patient at all, answer no for each flagged event and explain once in the first note.
\end{itemize}

\paragraph{Intervention definitions shown to annotators.}
The two interventions used in the annotation study were presented to annotators with the following definitions, reproduced here verbatim.

\medskip
\noindent\textbf{Disposition-critical imaging priority}

\noindent\textbf{Rule:} If a diagnostic imaging study (TTE, TEE, CT with contrast, MRI, or ultrasound) is needed for deciding patient disposition, the study is prioritized so that the study will be completed the same calendar day if the study is ordered before 3pm and will be completed by noon the following day if ordered after 3pm.
As such, order-to-completion for a flagged study should generally never exceed 24 hours, which encompasses the time of acknowledgment of the order, transportation of the patient, capturing the image, and the final report on the imaging study.

\noindent\textbf{Inclusion Criteria:}
\begin{itemize}
  \item Imaging studies that are one of the listed modalities
  \item Its result is needed for the disposition decision, e.g., the team is waiting on it for critical diagnostic work-ups, to decide whether or not a patient can be discharged, or where they can be discharged to.
\end{itemize}

\noindent\textbf{Exclusion Criteria:}
\begin{itemize}
  \item The patient is not medically stable for transport or scanning
  \begin{itemize}
    \item Hemodynamic instability, somnolence, active resuscitation
  \end{itemize}
  \item Study blocked by anesthesia availability or study prerequisites (e.g. NPO status) and not scheduling queue
  \item Blocking clinical prerequisites that need to happen first, such as completion of the treatments, required follow-up intervals, or medical improvements.
  \item Documented patient factors that prevent the study from happening sooner (refusal, agitation, non-cooperation, patient in another procedure, documented allergies to contrasts, etc.)
\end{itemize}

\medskip
\noindent\textbf{Early Discharge planning}

\noindent\textbf{Rule:} Starting at the 1 day mark (24 hours), all patients will be evaluated for discharge planning.
Discharge planning will then proceed within the following 24 hours and patients with complex discharge needs relating to housing, post-acute facility placement (SNF, acute rehab, LTACH), hospice, durable medical equipment, transportation, home health, IHSS, social-benefit applications (e.g., Medi-Cal), and/or substance use will meet with social work or case management within the 24 hour (i.e., 24--48 hours) time period.
Qualifying complex patients are:
\begin{itemize}
  \item Unhoused/live in a shelter
  \item Patients who came directly from a SNF/facilities/programs
  \item Patients who had home services
  \item Patients who will need to be set up with home services or a SNF or acute rehab or Long-term acute care facility (LTACH) or hospice referral based on the first 24 hours
\end{itemize}
If the patient agrees, the social work/case manager will contact appropriate discharge resources within 24 hours after meeting with the patient.

\noindent\textbf{Inclusion Criteria:}
\begin{itemize}
  \item A post-acute placement referral happens during the encounter.
  \item The placement need was identifiable within 24 hours of admission
\end{itemize}

\noindent\textbf{Exclusion Criteria:}
\begin{itemize}
  \item The placement need came up later in the encounter (e.g., a new stroke, deterioration, or new functional decline) so identification does not occur within the first 24 hours
  \item The patient has been discharged or dies in the hospital within 24 hours
  \item Meeting with the social work/case manager within the 24--48 hour window is not possible because of medical reasons or legal reasons, e.g., patient is not responding and no family members available, guardianship/conservatorship proceedings
  \item Patients who cannot be medically assessed for disposition placement (e.g., needs PT/OT/SLP)
  \item Meeting with the social work/case manager within the 24--48 hour window is not possible because it's a holiday/weekend
  \item Discharge needs related to medications (e.g., discharge medications and medication reconciliation) are not accelerated through this intervention.
\end{itemize}

\paragraph{Human-versus-LLM metrics.}
Intervention-target specificity is the number of events the annotator judged non-targets that the LLM also left unflagged, divided by the number of events the annotator judged non-targets.
Intervention-target recall is the number of annotator-endorsed targets the LLM flagged, divided by the number of annotator-endorsed targets.
Both metrics are computed over every event extracted for the hospitalization, not only the events the LLM flagged.
Edge precision is the number of LLM-proposed edges the annotator kept, divided by the number of LLM-proposed edges the annotator explicitly reviewed.
Edge recall is the number of annotator-endorsed edges present in the LLM's graph, divided by the number of annotator-endorsed edges.
As a broader secondary denominator, we also compute an edge specificity that counts every unselected candidate parent as a negative judgment; this yields 97.9\% pooled across interventions.
Point estimates weight each intervention-hospitalization pair by the inverse of its annotation multiplicity, so a double-annotated pair contributes total weight one.
Confidence intervals are computed by a cluster bootstrap over hospital stays, where the same hospitalization appearing under both interventions resamples jointly, and are conditional on the five annotators.
Edge additions concentrate in individual annotators, so edge recall is sensitive to the panel; recomputing it with any one annotator removed moves it between roughly 79\% and 92\%.

\paragraph{Estimate-agreement setup.}
For the comparison of estimated time saved on annotator-corrected versus LLM-generated graphs (final row of Table~\ref{tab:llm_is_ok}), we re-ran egg-computation on each graph while holding the timing models fixed.
The LLM's Type III timing predictions are applied to each graph's own intervention targets and capped at the observed times, since operational interventions remove delays rather than create them.
The Type II step is deterministic, with each downstream event occurring as soon as its last parent does.
Under this setup, any difference in the estimated time saved traces to the graphs alone rather than to timing-model noise.

\paragraph{Inter-annotator metrics.}
Inter-annotator agreement is computed on the eight double-annotated pairs.
Edge units are the edges both annotators explicitly reviewed, coded keep versus delete; one-sided additions and differences in reviewed scope are not measured.
Node units are the intervention-target status of every event in the hospitalization.
We report raw percent agreement, Krippendorff's $\alpha$, Gwet's AC$_1$, and within-class agreement \citep{Gwet2008-cn, Hayes2007-pu}.
For a rating class $c$ with $n_c$ concordant pairs and $d_c$ discordant pairs involving a $c$ rating, within-class agreement is $2n_c / (2n_c + d_c)$, the share of $c$ ratings matched by the other annotator \citep{Feinstein1990-pw, Cicchetti1990-dz}.
We use the nominal versions of both chance-corrected coefficients because annotator pairings varied across tasks, so no stable rater-one-versus-rater-two assignment exists for Cohen's $\kappa$.
Under the keep-skewed rating distribution, $\alpha$ and AC$_1$ diverge because their chance models differ, and the within-class rates are what localize the disagreement to deletions.
All inter-annotator statistics are descriptive rather than inferential, as they pool eight pairs covering seven distinct hospitalizations and condition on this annotator panel.

\section{Type II timing evaluation details}
\label{sec:type_iii_appendix}

\paragraph{Protocol.}
Each method predicts the delay, in hours, between a downstream event (Type II node) and the last of its parent events to complete, given the parent events with their timestamps, the event's own description without its timestamp, and a patient summary.
Under leave-one-hospitalization-out cross-validation, every method trains on the delays of all other hospitalizations, pooled across the five diagnosis groups within the same intervention, and predicts the held-out hospitalization's delays.
The ICL model receives fifteen worked examples per query, drawn first from training delays whose parent and child event types match the query and topped up at random, and returns a numeric delay with a short rationale (the prompt template is given in Appendix~\ref{sec:type_iii_prompt}); the same GPT-5 configuration is used for every intervention.
The zero baseline predicts no delay; pair-type median predicts the training median among delays with the same parent and child event types; ridge and random forest use event-type indicators and temporal features (hour of day, day of week, time since admission); embedding + ridge adds sentence embeddings of the same text the ICL model reads, so the strongest regression baseline has the same information access.

\paragraph{Cleaning descriptions to remove timing-revealing language.}
The event descriptions and patient summaries the models read are scrubbed of timing-revealing language (clock times, calendar dates, weekday names, and spelled-out durations) so the target delay cannot leak into the inputs; parent timestamps are retained because they are legitimate inputs.
An automated audit scans every model-visible field for the same patterns and reports zero flagged rows across all 4{,}790 evaluated nodes.
\label{sec:type_iii_leakage}

\paragraph{Per-intervention accuracy.}
Table~\ref{tab:type_iii_by_intervention} breaks the pooled comparison of Table~\ref{tab:type_iii_results} out by intervention.
ICL attains the lowest MAE for nine of the eleven interventions, with every paired interval clear of zero for eight.
The exceptions concentrate where delays are near-immediate: discharge before noon (median delay 1.3 hours) is the one intervention where the zero baseline is numerically better, and for early discharge planning and the weekend discharge coordinator the closest baseline is statistically indistinguishable from ICL.
\label{sec:type_iii_by_intervention}

\begin{table}[htb]
\caption{Per-intervention Type II timing accuracy (hours).
$\Delta$MAE is the paired difference in MAE between a baseline and ICL (baseline minus ICL, so positive values favor ICL), shown only for each intervention's closest baseline, i.e., the competitor with the smallest $\Delta$MAE.
The final column counts how many of the five baseline comparisons have 95\% intervals lying entirely above zero, so 5/5 means ICL beats every baseline with an interval clear of zero.}
\label{tab:type_iii_by_intervention}
\resizebox{\linewidth}{!}{%

\begin{tabular}{lrrccccc}
    \toprule
    Intervention & Nodes & Hosp. & ICL MAE & ICL Med.\ AE & Closest baseline & $\Delta$MAE [95\% CI] & CIs $>$ 0 \\
    \midrule
    Discharge before noon & 206 & 141 & 6.2 & 1.3 & Zero gap & -0.7 [-2.0, 0.5] & 2/5 \\
    Weekend diagnostic slots & 453 & 131 & 11.8 & 2.5 & Pair-type median & +1.3 [0.6, 2.0] & 5/5 \\
    Weekend discharge coordinator & 253 & 148 & 11.9 & 1.6 & Zero gap & +1.1 [-0.8, 3.3] & 3/5 \\
    Antimicrobial stewardship & 238 & 109 & 18.6 & 2.6 & Zero gap & +2.8 [1.2, 4.7] & 5/5 \\
    Weekend PT/OT coverage & 515 & 145 & 21.5 & 2.0 & Pair-type median & +2.1 [1.2, 3.1] & 5/5 \\
    Same-day echo completion & 514 & 147 & 21.6 & 2.9 & Pair-type median & +2.8 [1.6, 4.3] & 5/5 \\
    Weekend SNF intake & 237 & 118 & 22.5 & 2.0 & Zero gap & +2.7 [0.7, 4.7] & 5/5 \\
    Disposition-critical imaging priority & 711 & 149 & 23.3 & 3.9 & Pair-type median & +3.3 [2.0, 4.5] & 5/5 \\
    Early discharge planning & 382 & 150 & 24.4 & 3.3 & Pair-type median & -0.2 [-2.2, 1.6] & 3/5 \\
    Consult response escalation & 582 & 150 & 34.5 & 4.0 & Pair-type median & +5.4 [3.4, 7.3] & 5/5 \\
    Urgent IR procedure block & 699 & 120 & 40.0 & 14.9 & Pair-type median & +6.6 [4.8, 8.6] & 5/5 \\
    \bottomrule
\end{tabular}
}
\end{table}

\paragraph{Accuracy by delay stratum.}
Table~\ref{tab:type_iii_by_stratum} reports MAE and median AE within strata of the observed delay.
The zero baseline is nearly exact on the 40\% of delays under one hour and remains strong to six hours, while the regression models hedge toward mid-range predictions that lose badly there.
ICL concedes little on the short strata, is clearly best between six hours and one day, and trails only the regression models' small edge in the extreme tail.
Within-stratum MAE is dominated by a small share of large errors, so the median gives the typical error; for example, ICL's 4.2-hour MAE on the shortest stratum reflects a median error of 1.5 hours alongside the 12\% of nodes missed by more than 10 hours.
\label{sec:type_iii_strata}

\begin{table}[htb]
\caption{MAE (median AE in parentheses, both in hours) by observed-delay stratum, pooled across interventions.}
\label{tab:type_iii_by_stratum}
\resizebox{\linewidth}{!}{%

\begin{tabular}{lrcccccc}
    \toprule
    Delay stratum & Nodes & Zero gap & Pair-type median & Ridge & Random forest & Embedding + ridge & ICL (GPT-5) \\
    \midrule
    $\leq$1h & 1904 & 0.1 (0.0) & 4.6 (1.8) & 24.0 (20.6) & 23.8 (13.9) & 21.5 (13.6) & 4.2 (1.5) \\
    1--6h & 914 & 2.9 (2.6) & 4.7 (2.0) & 22.3 (19.0) & 23.6 (12.9) & 21.1 (11.7) & 4.3 (1.3) \\
    6--24h & 715 & 16.0 (17.7) & 11.9 (11.3) & 18.5 (13.5) & 26.1 (15.7) & 23.8 (17.9) & 10.0 (5.9) \\
    $>$24h & 1257 & 96.5 (62.6) & 86.7 (50.8) & 67.6 (34.0) & 71.3 (37.3) & 67.3 (35.9) & 76.8 (44.5) \\
    \bottomrule
\end{tabular}
}
\end{table}

\paragraph{Structural validity sensitivity.}
About 10\% of the downstream node timings (509 of 4{,}790) come from DAGs that fail an automated structural check (temporal inconsistencies, cycles, or events with no path to discharge); these timings remain in the primary analysis.
Table~\ref{tab:type_iii_validity} recomputes the pooled comparison excluding them: every MAE moves by less than half an hour and every interval conclusion is unchanged.
\label{sec:type_iii_validity}

\begin{table}[htb]
\caption{Structural-validity sensitivity for the pooled Type II timing comparison.}
\label{tab:type_iii_validity}
\resizebox{\linewidth}{!}{%

\begin{tabular}{llccc}
    \toprule
    Population & Method & MAE & Med.\ AE & $\Delta$MAE vs.\ ICL [95\% CI] \\
    \midrule
    All DAGs (4790 nodes, 736 hosp.) & Zero gap & 28.30 & 2.74 & +4.18 [3.57, 4.77] \\
     & Pair-type median & 27.23 & 4.63 & +3.11 [2.57, 3.67] \\
     & Ridge & 34.30 & 20.92 & +10.18 [9.37, 10.99] \\
     & Random forest & 36.58 & 17.73 & +12.46 [11.18, 13.90] \\
     & Embedding + ridge & 33.79 & 18.90 & +9.67 [8.57, 10.70] \\
     & ICL (GPT-5) & 24.12 & 3.00 & --- \\
    \addlinespace
    Structurally valid only (4281 nodes, 701 hosp.) & Zero gap & 28.57 & 2.85 & +4.23 [3.60, 4.89] \\
     & Pair-type median & 27.27 & 4.63 & +2.94 [2.38, 3.50] \\
     & Ridge & 34.22 & 20.55 & +9.89 [9.04, 10.75] \\
     & Random forest & 36.59 & 17.73 & +12.25 [10.76, 13.90] \\
     & Embedding + ridge & 33.87 & 18.51 & +9.53 [8.43, 10.70] \\
     & ICL (GPT-5) & 24.34 & 3.00 & --- \\
    \addlinespace
    \bottomrule
\end{tabular}
}
\end{table}

\section{Additional Results}

\begin{table}
    \centering
    \caption{
    Inter-annotator agreement on the double-annotated hospitalizations.
    Edge units are the edges both annotators explicitly reviewed; node units are the intervention-target status of every event in the hospitalization.}
    \label{tab:interannotator}
    \resizebox{\linewidth}{!}{%
    \begin{tabular}{lccccc}
    \toprule
     & $n$ & Raw agreement & Krippendorff's $\alpha$ & Gwet's AC$_1$ & Within-class agreement \\
    \midrule
    Edges (keep vs.\ delete) & 45 & 82.2\% & 0.11 & 0.78 & keep 90.0\%, delete 20.0\% \\
    Nodes (intervention target vs.\ not) & 116 & 97.4\% & 0.89 & 0.97 & non-target 98.5\%, target 90.3\% \\
    \bottomrule
    \end{tabular}%
    }
\end{table}

\paragraph{Inter-rater evaluation results.}
Table~\ref{tab:interannotator} reports agreement between annotators on the double-annotated intervention-hospitalization pairs.
Seven of the eight pairs produced identical sets of intervention targets, and raw agreement on the keep-versus-delete decision for commonly reviewed edges was similarly high.
Because roughly 89\% of the individual edge ratings are in the ``keep'' group, the two chance-corrected coefficients diverge under this skew (Krippendorff's $\alpha = 0.11$, Gwet's AC$_1$ = 0.78) \citep{Feinstein1990-pw, Cicchetti1990-dz}, so we follow \citep{Cicchetti1990-dz} and report agreement within each rating class, i.e., how often a rating of a given class was matched by the other annotator.
Annotators matched 90\% of keep ratings but only 20\% of deletions, and seven of the eight discordant edges concern the same judgment of whether an event that precedes discharge actually gates the discharge decision or merely happens before it.
Disagreements over intervention targets likewise reflect differing interpretations of an eligibility criterion; annotators applied different thresholds for when a patient's discharge needs are identifiable within the first day of the stay, which determines whether early discharge planning applies to the patient at all.
Both are due to differences in interpretation of the intervention specification rather than disagreements about the DAG generated by the LLM.

\begin{table}[htb]
\caption{Estimated average time saved per eligible hospitalization (hours), by intervention, from egg-computation with the ICL timing model, for only the eligible interventions.
Intervals are 95\% bootstrap intervals over hospitalizations.
Reference variants and per-diagnosis breakdowns are given in Appendix~\ref{sec:savings_appendix}.}
\label{tab:intervention_savings}
\centering

\begin{tabular}{lrcc}
    \toprule
    Intervention & $n$ & Mean savings (h) & Median (h) \\
    \midrule
    Consult response escalation & 150 & 26.6 [17.0, 37.8] & 0.9 \\
    Disposition-critical imaging priority & 150 & 19.6 [12.8, 27.8] & 0.0 \\
    Early discharge planning & 150 & 15.7 [7.2, 25.6] & 0.0 \\
    Discharge before noon & 150 & 14.6 [9.2, 20.8] & 1.0 \\
    Weekend discharge coordinator & 150 & 19.6 [10.9, 30.9] & 0.6 \\
    Urgent IR procedure block & 120 & 49.9 [27.0, 78.7] & 0.0 \\
    Same-day echo completion & 150 & 22.9 [11.7, 37.0] & 0.0 \\
    Antimicrobial stewardship & 110 & 10.6 [4.2, 17.9] & 0.0 \\
    Weekend SNF intake & 123 & 21.3 [12.7, 32.0] & 1.3 \\
    Weekend PT/OT coverage & 145 & 11.2 [5.4, 18.9] & 0.0 \\
    Weekend diagnostic slots & 133 & 13.8 [6.7, 21.7] & 0.0 \\
    \bottomrule
\end{tabular}
 \end{table}

\section{Clinician review of the estimated rankings}
\label{sec:clinician_review}

\paragraph{Reviewer panel.}
Four clinicians from the hospital's QI team reviewed the estimates, two of them authors of this work (AH, LZ) and two of them not.
Reviewers worked independently and did not see one another's responses.

\paragraph{Interface and review protocol.}
Reviewers completed three tasks asynchronously through a self-contained web page.
The page put the tasks in a fixed order and withheld our estimates until the first was finished.

\paragraph{Task 1: ranking the shortlist.}
The three interventions our estimates ranked highest by expected impact were shown in an order that carried no information about our estimated ranking.
Each was shown with a one-sentence description of the operational rule and the current delay it targets.
The remaining eight interventions were also included for reference.
Reviewers were asked to rank the three, with the instruction ``Before seeing the order our estimates put them in, rank the three yourself: 1st = most total inpatient time saved.''
Only after all three were ranked did the page reveal the full ranking of eleven candidates, reporting for each the eligibility rate, the estimated time saved per eligible hospitalization, and the expected impact, together with a slopegraph pairing the eligibility and expected-impact rankings and a scatter of the two quantities whose product is the expected impact.
Reviewers were then asked which of the estimated rankings surprised them and which matched their experience.
Table~\ref{tab:clinician_orderings} gives the orderings from the reviewers.

\paragraph{Task 2: reviewing example hospitalizations.}
Each reviewer was assigned three interventions: the top two interventions, consult response escalation and disposition-critical imaging priority, and one random intervention.
For each assigned intervention, the page showed the eligibility rate, the estimated time saved per eligible hospitalization, the expected impact, the share of eligible hospitalizations with no estimated benefit, and the median saving among those that benefit (see Figure~\ref{fig:example_task_2}).
It also showed the eligibility rate within each diagnosis group and a histogram of estimated savings across eligible hospitalizations.
Below these summaries the page gave an LLM-written account of the mechanism, separating the hospitalizations in which the intervention shortens the stay from those in which discharge remains gated by some other event, together with the number of hospitalizations of each kind.
Four example hospitalizations followed, each rendered as the observed and counterfactual timings of every event in its DAG, with the intervention targets marked and each event's incoming edges shown on hover.
Reviewers were asked two questions of each assigned intervention:
\begin{itemize}
  \item ``Are the generated analyses clinically plausible and helpful in interpreting the likely impact of this QI intervention? Please provide detailed feedback since this will help us improve the interface and understand how well the current pipeline is working.''
  \item ``Based on the generated analyses, would you consider modifying the intervention further to improve its efficacy? If so, please explain how you would change it and what in the generated analyses has prompted you to consider this modification.''
\end{itemize}

For each intervention we drew two hospitalizations at random from those whose stay the estimate shortens and two at random from those whose estimated saving is within one hour of zero.
The examples are therefore balanced between the two outcomes by construction and do not reflect the share of eligible hospitalizations that benefit, which the accompanying histogram reports separately.

\paragraph{Task 3: general comment.}
Reviewers were asked two closing questions: ``Do you have general feedback on the QI intervention analysis pipeline? How can the analysis be improved further?'' and ``Are there other data sources that you think would be important to add to our pipeline (we currently analyze only orders, notes, results, and their timestamps)?''

\begin{table}[htb]
\centering
\caption{Orderings of the three highest-ranked interventions returned in Task 1, each produced before the reviewer saw any estimate, against the ordering by estimated expected impact.}
\label{tab:clinician_orderings}
\begin{tabular}{llll}
\toprule
 & First & Second & Third \\
\midrule
R1 & Consult response escalation & Disposition-critical imaging & Early discharge planning \\
R2 & Consult response escalation & Disposition-critical imaging & Early discharge planning \\
R3 & Early discharge planning & Disposition-critical imaging & Consult response escalation \\
R4 & Early discharge planning & Consult response escalation & Disposition-critical imaging \\
\midrule
Estimated & Consult response escalation & Disposition-critical imaging & Early discharge planning \\
Eligibility & Disposition-critical imaging & Consult response escalation & Early discharge planning \\
\bottomrule
\end{tabular}
\end{table}

\section{Time-saved estimation details}
\label{sec:savings_appendix}

\paragraph{Results.}

\begin{table}[htb]
\caption{
Prevalence-based versus causal prioritization of the eleven candidate interventions.
Eligibility rate is the share of hospitalizations that meet an intervention's inclusion criteria.
Expected time saved is the causal effect estimate from egg-computation.
Interventions are ordered by expected impact.
Brackets give 95\% percentile bootstrap intervals, resampling the screened cohort within each diagnosis group and the eligible hospitalizations within each intervention.
}
\label{tab:prevalence_vs_savings}
\centering
\begin{tabular}{lrrrr}
    \toprule
    & \multicolumn{2}{c}{Eligibility rate} & \multicolumn{2}{c}{Expected impact} \\
    \cmidrule(lr){2-3} \cmidrule(lr){4-5}
    Intervention & \% [95\% CI] & rank & h/screened [95\% CI] & rank \\
    \midrule
    Consult response escalation & 71 [69, 73] & 2 & 18.9 [11.4, 26.9] & 1 \\
    Disposition-critical imaging priority & 80 [78, 81] & 1 & 15.6 [10.2, 22.1] & 2 \\
    Early discharge planning & 69 [67, 70] & 3 & 10.8 [5.0, 17.5] & 3 \\
    Discharge before noon & 42 [40, 44] & 4 & 6.2 [3.9, 8.6] & 4 \\
    Weekend discharge coordinator & 26 [24, 28] & 5 & 5.0 [2.7, 8.0] & 5 \\
    Urgent IR procedure block & 10 [8, 11] & 8 & 4.8 [2.5, 7.4] & 6 \\
    Same-day echo completion & 17 [15, 18] & 7 & 3.9 [2.0, 6.5] & 7 \\
    Antimicrobial stewardship & 21 [20, 22] & 6 & 2.2 [0.9, 3.7] & 8 \\
    Weekend SNF intake & 8 [7, 9] & 10 & 1.7 [1.0, 2.6] & 9 \\
    Weekend PT/OT coverage & 9 [8, 10] & 9 & 1.0 [0.5, 1.7] & 10 \\
    Weekend diagnostic slots & 7 [6, 8] & 11 & 0.9 [0.4, 1.6] & 11 \\
    \bottomrule
\end{tabular}
 \end{table}

\paragraph{Rank stability.}
Table~\ref{tab:rank_stability} quantifies which ranking statements survive resampling.
Each of 2{,}000 bootstrap resamples redraws the screened cohort within each diagnosis group (jointly across interventions, preserving the correlation of eligibility within a hospitalization) and the eligible hospitalizations within each intervention, then recomputes both rankings.
The top of the expected-impact ranking is stable: consult response escalation is first in 72\% of resamples and disposition-critical imaging priority in 24\%, and the same three interventions occupy the top three in 89\% of resamples, while the bottom three are drawn almost entirely from the weekend interventions.
\label{sec:rank_stability}

\begin{table}[htb]
\caption{Rank stability across 2{,}000 bootstrap resamples: the probability each intervention lands in the top three of each ranking, is first by expected impact, or is in the bottom three by expected impact.}
\label{tab:rank_stability}
\resizebox{\linewidth}{!}{%
\begin{tabular}{lcccc}
    \toprule
    & \multicolumn{2}{c}{Top 3 probability} & & \\
    \cmidrule(lr){2-3}
    Intervention & Prevalence & Expected impact & $P(\text{rank 1, expected impact})$ & $P(\text{bottom 3, expected impact})$ \\
    \midrule
    Consult response escalation & 1.00 & 1.00 & 0.72 & 0.00 \\
    Disposition-critical imaging priority & 1.00 & 1.00 & 0.24 & 0.00 \\
    Early discharge planning & 1.00 & 0.89 & 0.03 & 0.00 \\
    Discharge before noon & 0.00 & 0.07 & 0.00 & 0.00 \\
    Weekend discharge coordinator & 0.00 & 0.03 & 0.00 & 0.00 \\
    Urgent IR procedure block & 0.00 & 0.02 & 0.00 & 0.00 \\
    Same-day echo completion & 0.00 & 0.00 & 0.00 & 0.01 \\
    Antimicrobial stewardship & 0.00 & 0.00 & 0.00 & 0.28 \\
    Weekend SNF intake & 0.00 & 0.00 & 0.00 & 0.72 \\
    Weekend PT/OT coverage & 0.00 & 0.00 & 0.00 & 0.99 \\
    Weekend diagnostic slots & 0.00 & 0.00 & 0.00 & 1.00 \\
    \bottomrule
\end{tabular}
}
\end{table}

\begin{table}[htb]
\caption{Primary estimate of average time saved (hours) by intervention and diagnosis group; cell entries are mean (number of eligible hospitalizations in the analysis subsample).}
\label{tab:savings_by_dx}
\resizebox{\linewidth}{!}{%

\begin{tabular}{lrrrrrr}
    \toprule
    Intervention & A41 & I63 & L03 & S06 & T40 & Pooled \\
    \midrule
    Consult response escalation & 29.5 (30) & 52.1 (30) & 12.4 (30) & 21.1 (30) & 17.9 (30) & 26.6 (150) \\
    Disposition-critical imaging priority & 28.0 (30) & 6.8 (30) & 24.5 (30) & 26.7 (30) & 11.9 (30) & 19.6 (150) \\
    Early discharge planning & 19.6 (30) & 37.7 (30) & 9.1 (30) & 3.0 (30) & 9.4 (30) & 15.7 (150) \\
    Discharge before noon & 14.8 (30) & 24.3 (30) & 7.5 (30) & 5.8 (30) & 20.5 (30) & 14.6 (150) \\
    Weekend discharge coordinator & 28.1 (30) & 12.0 (30) & 9.4 (30) & 41.1 (30) & 7.5 (30) & 19.6 (150) \\
    Urgent IR procedure block & 61.9 (30) & 82.0 (30) & 80.0 (7) & 22.9 (30) & 18.4 (23) & 49.9 (120) \\
    Same-day echo completion & 29.7 (30) & 22.0 (30) & 11.5 (30) & 46.9 (30) & 4.3 (30) & 22.9 (150) \\
    Antimicrobial stewardship & 15.1 (30) & 5.7 (12) & 4.5 (30) & 9.7 (8) & 14.4 (30) & 10.6 (110) \\
    Weekend SNF intake & 15.3 (30) & 30.2 (30) & 5.6 (11) & 17.8 (30) & 29.8 (22) & 21.3 (123) \\
    Weekend PT/OT coverage & 10.0 (30) & 3.3 (30) & 7.0 (25) & 21.3 (30) & 13.8 (30) & 11.2 (145) \\
    Weekend diagnostic slots & 28.2 (30) & 16.8 (30) & 14.8 (30) & -1.1 (16) & 2.1 (27) & 13.8 (133) \\
    \bottomrule
\end{tabular}
}
\end{table}

\section*{Acknowledgments}
The PROSPECT lab (PV, JO, LZ, JF) thanks Zuckerberg Priscilla Chan quality improvement fund via the San
Francisco General Foundation for funding this project.
We thank Robert Gallo and Nina Singh for helping review the final analyses and providing clinical feedback.

\section*{Ethics Statement}
This study was conducted as a quality improvement initiative at Zuckerberg San Francisco General Hospital and
Trauma Center. This study was approved by the University of California, San Francisco Institutional Review
Board (protocol 22-36613) as well as Zuckerberg San Francisco General Hospital. Individual informed consent was
waived, given the retrospective nature of the analysis and the use of deidentified electronic health record data. All
data were stored and analyzed within institutional systems in compliance with HIPAA regulations.

 \end{document}